\newtheorem{theorem}{\noindent Theorem}
\newtheorem{definition}[theorem]{\noindent Definition}
\newtheorem{lemma}[theorem]{Lemma}
\newtheorem{REMARK}[theorem]{\noindent Remark}
\newenvironment{remark}{\begin{REMARK}\rm}{\rm\end{REMARK}}
\newtheorem{EXAMPLE}[theorem]{\noindent Example}
\newenvironment{example}{\begin{EXAMPLE}\rm}{\rm\end{EXAMPLE}}
\renewcommand{\mathbf}[1]{{\bm{#1}}}     
\newenvironment{myalgorithm}{%
        \begin{minipage}{\columnwidth}\vspace{1.8ex}
        \makebox[0ex]{}\hrulefill\makebox[0ex]{}\\*%
             }{%
             \makebox[0ex]{}\hrulefill\makebox[0ex]{}\end{minipage}}
       \renewenvironment{proof}{\begin{IEEEproof}}{\end{IEEEproof}}
\newcommand{\GF}{{\mathrm{GF}}}
\newcommand{\bldzero}{{\mathbf{0}}}
\newcommand{\Epsilon}{{\mathcal{E}}}
\newcommand{\Alpha}{{\mathrm{A}}}
\newcommand{\Beta}{{\mathrm{B}}}
\newcommand{\Mu}{{\mathrm{M}}}
\newcommand{\blda}{{\mathbf{a}}}
\newcommand{\blde}{{\mathbf{e}}}
\newcommand{\bldu}{{\mathbf{u}}}
\newcommand{\uu}{\mathrm{T}}
\newcommand{\code}{{\mathcal{C}}}
\newcommand{\varcode}{{\mathsf{C}}}
\newcommand{\Code}{{\mathbb{C}}}
\newcommand{\Support}{{\mathsf{supp}}}
\newcommand{\Kout}{{\mathsf{K}}}
\newcommand{\Dout}{{\mathsf{D}}}
\newcommand{\Kin}{{\mathsf{k}}}
\newcommand{\Rin}{{\mathsf{r}}}
\newcommand{\Din}{{\mathsf{d}}}
\newcommand{\Hin}{{\mathsf{H}}}
\newcommand{\colspan}{{\mathsf{colspan}}}
\newcommand{\rank}{{\mathsf{rank}}}
\newcommand{\transpose}{{\mathsf{T}}}
\newcommand{\Prob}{{\mathsf{Prob}}}
\newcommand{\decoder}{{\mathcal{D}}}
\newcommand{\Tau}{{t}}
\newcommand{\Rho}{{r}}
\newcommand{\GRS}{{\mathrm{GRS}}}
\newcommand{\In}{{\mathrm{in}}}
\newcommand{\JJ}{{\mathcal{J}}}
\newcommand{\KK}{{\mathcal{K}}}
\newcommand{\LL}{{\mathcal{L}}}
\newcommand{\QQ}{{\mathcal{Q}}}
\newcommand{\RR}{{\mathcal{R}}}
\renewcommand{\mod}{{\mathrm{mod}}}
\newcommand{\Interval}[1]{{\langle #1 \rangle}}
\newcommand{\Ceil}[1]{{\lceil #1 \rceil}}
\newcommand{\GS}{{\mathit{\Theta}}}
\newcommand{\List}{L}
\newcommand{\eqed}{\hspace*{\fill}$\square$}
\newcommand{\hE}{\hat{E}}
\newcommand{\he}{\hat{e}}
\newcommand{\hJ}{\hat{J}}
\newcommand{\hJJ}{\hat{\JJ}}
\newcommand{\hLL}{\hat{\LL}}
\newcommand{\hS}{\hat{S}}
\newcommand{\hZ}{\hat{Z}}
\newcommand{\hEpsilon}{\hat{\Epsilon}}
\newcommand{\hLambda}{\hat{\Lambda}}
\newcommand{\hsigma}{\hat{\sigma}}
\newcommand{\hOmega}{\hat{\Omega}}
\newcommand{\homega}{\hat{\omega}}
\newcommand{\tcode}{\tilde{\code}}
\newcommand{\tH}{\tilde{H}}
\newcommand{\tS}{\tilde{S}}
\newcommand{\etal}{\emph{et al.}}
\newcommand{\ie}{\emph{i.e.}}
\title{Coding for Combined \\ Block--Symbol Error Correction}
\author{Ron M. Roth,~\IEEEmembership{Fellow,~IEEE,} and
        Pascal O. Vontobel,~\IEEEmembership{Senior Member,~IEEE}%
  \thanks{Manuscript received February~08, 2013;
          revised October~31, 2013;
          date of current version February 26, 2014.
          This paper was previously presented in part at the
          IEEE International Symposium on Information Theory,
          Istanbul, Turkey, July 2013.}%
  \thanks{R.~M.~Roth is with the
          Computer Science Department,
          Technion---Israel Institute of Technology,
          Haifa 32000, Israel.
          This work was done in part while visiting
          Hewlett--Packard Laboratories,
          1501 Page Mill Road,
          Palo Alto, CA 94304, USA
          (e-mail: ronny@cs.technion.ac.il).}%
  \thanks{P.~O.~Vontobel was with Hewlett--Packard Laboratories,
          1501 Page Mill Road, Palo Alto, CA 94304, USA.
          He is now with the 
          Department of Electrical Engineering, 
          Stanford University, Stanford, CA 94305, USA,
          and the
          Department of Information Technology
          and Electrical Engineering,
          ETH Zurich, 8092 Zurich, Switzerland
          (e-mail: pascal.vontobel@ieee.org).
          }
}
\begin{document}

\maketitle

\begin{abstract}
  We design low-complexity error correction coding schemes for channels 
  that introduce different types of errors and erasures:
  on the one hand, the proposed schemes can successfully deal with
  symbol errors and erasures, and, on the other hand, they can also
  successfully handle phased burst errors and erasures.
\end{abstract}

\begin{IEEEkeywords}
  Decoding,
  generalized Reed--Solomon (GRS) code,
  Feng--Tzeng algorithm,
  phased burst erasure,
  phased burst error,
  symbol erasure,
  symbol error.
\end{IEEEkeywords}

\section{Introduction}
\label{sec:introduction}

Many data transmission and storage systems suffer from different
types of errors at the same time. For example, in some data storage
systems the state of a memory cell might be altered by
an alpha particle that hits this memory cell. On the other hand,
an entire block of memory cells might become unreliable because of
hardware wear-out. Such data transmission and storage systems can be
modeled by channels that introduce symbol errors and
block (\ie, phased burst) errors,
where block errors encompass several contiguous
symbols. Moreover, if some side information is available, say based on
previously observed erroneous behavior of a single or of multiple memory
cells, this can be modeled as symbol erasures and block erasures.

In this paper, we design novel error correction coding schemes
that can deal with both symbol and block errors and both symbol
and block erasures for a setup as in
Fig.~\ref{fig:memory:block:1}.
\begin{itemize}

\item
Every small square corresponds to a symbol in
$F = \GF(q)$, where $q$ is an arbitrary prime power.
(In applications, $q$ is typically a small power of $2$.)

\item
All small squares are arranged in the shape of an $m \times n$
rectangular array.

\item
We say that a \emph{symbol error} happens if the content of a small
square is altered. We say that a \emph{block error} happens
if one or several small squares in a column of the array are altered.%
  \footnote{%
      In our setting, we think of the symbol errors and block errors
      as being caused by two different mechanisms.
      In this model, an observer cannot distinguish a block error from one or
      multiple symbol errors in the same column.
  }

\item
Similarly, we say that a \emph{symbol erasure} happens if the content
of a small square is erased and we say that
a \emph{block erasure} happens if all small squares in
a column of the array are erased.%
  \footnote{%
      The positions of the erased symbols and blocks are assumed to be
      provided as side information. Thus, the squares contain
      elements of $F$ (even at the erased positions), and some of
      the erased squares might in fact contain correct values.
    }

\end{itemize}

\begin{figure}
  \begin{center}
	\setlength{\unitlength}{2.2ex}
	\begin{picture}(21,9)(-1,0)
	\definecolor{error}{rgb}{.25,.25,.25}
	\definecolor{erasure}{rgb}{.75,.75,.75}
	\newsavebox{\error}\sbox{\error}{\color{error}\thicklines
		\multiput(.02,.02)(00,.02){49}{\line(1,0){.96}}
	}
	\newsavebox{\erasure}\sbox{\erasure}{\color{erasure}\thicklines
		\multiput(.02,.02)(00,.02){49}{\line(1,0){.96}}
	}
	\newcounter{x}

	\put(02,03){\usebox{\error}}
	\put(02,05){\usebox{\error}}
	\put(02,05){\usebox{\error}}
	\put(14,01){\usebox{\error}}
	\multiput(05,00)(0,1){08}{\usebox{\error}}
	\multiput(08,00)(0,1){08}{\usebox{\error}}
	\put(03,00){\usebox{\erasure}}
	\multiput(16,00)(0,1){08}{\usebox{\erasure}}

	\multiput(-.02,00)(0,1){09}{\line(1,0){20.04}}
	\multiput(00,00)(1,0){21}{\line(0,1){08}}

	\setcounter{x}{0}
	\put(.5,8.5){%
		\loop \ifnum \thex < 20
			\put(\thex,00){\makebox(0,0){\tiny$\thex$}}
			\addtocounter{x}{1}
		\repeat
	}
	
	\setcounter{x}{0}
	\put(-.5,7.5){%
		\loop \ifnum \thex < 8
			\put(00,-\thex){\makebox(0,0){\tiny$\thex$}}
			\addtocounter{x}{1}
		\repeat
	}
	\end{picture}
  \end{center}
  \caption{Array of size $m \times n$ with symbol errors/erasures 
    and block errors/erasures. Here, $m = 8$, $n = 20$,
    and there are symbol
    errors at positions $(2,2)$, $(4,2)$, and $(6,14)$,
    a symbol erasure at position $(7,3)$, block errors
    in columns $5$ and $8$, and a block erasure in column $16$.}
  \label{fig:memory:block:1}
  \vspace{-0.15cm}
\end{figure}

We can correct such errors and erasures by imposing that
the symbols in such an array constitute a codeword in some suitably
chosen code $\Code$ of length $mn$ over $F$.
The two main ingredients of the code $\Code$ that is proposed in
this paper are, on the one hand, a matrix $H_\In$
of size $m \times (mn)$ over $F$, and, on the other hand,
a code $\code$ of length $n$ over $F$. Namely, an array forms
a codeword in $\Code$ if and only if every row of the array is
a codeword in $\code$ once the $n$ columns have been transformed
by $n$ different bijective mappings $F^m \to F^m$ derived from
the matrix $H_\In$. The resulting error-correcting coding scheme has
the following salient features:

\begin{itemize}
\item
It can be seen as a concatenated coding scheme, however with two
somewhat distinctive
features. First, multiple inner codes
are used (one for every column encoding), and, second,
all these inner codes have rate one
  (\ie, the encoders of these inner codes can be considered to be
  column scramblers).

\item
One can identify a range of code parameters for $\Code$ for which (to
the best of our knowledge) the resulting redundancy improves upon
the best known.

\item
One can devise efficient decoders for 
combinations of symbol and block errors and erasures
most relevant in practical applications.
In particular, these decoders are more efficient than a corresponding
decoder for a suitably chosen generalized Reed--Solomon (GRS)
code of length $mn$ over $F$, assuming such a GRS code exists in
the first place. (Finding efficient decoders for the general
case is still an open problem.)
\end{itemize}

\subsection{Paper Overview}
\label{sec:paper:overview}

The paper starts in Section~\ref{sec:simplified:code:construction} by
considering a simplified version of the above error and erasure scenario
and of the above-mentioned code construction.
Namely, in this section we consider only block errors and erasures,
\ie, no symbol errors or erasures. Moreover,
an $m \times n$ array forms a codeword if and only if every row is
a codeword in some code $\code$ of length $n$
(\ie, there are no bijective mappings applied to the columns);
in other words, the array code considered is simply
an $m$-level interleaving of $\code$. Our main purpose of
Section~\ref{sec:simplified:code:construction} is laying out some of
the ideas and tools that will be used in subsequent sections;
in particular, it is shown how one can take advantage of
the \emph{rank} of the error array
in order to increase the correction capability of the array code.
Nevertheless, the discussion in
Section~\ref{sec:probabilistic} may be of independent interest
in that it provides a simplified analysis of
the decoding error probability of interleaved
GRS codes when used
in certain (probabilistic) channel models.

We then move on to
Section~\ref{sec:main:code:construction}, which is the heart of
the paper and which gives all the details of the above-mentioned code
construction and compares it with other code constructions. 
Finally,
Section~\ref{sec:decoding:algorithms} discusses a variety of decoders
for the proposed codes.

\subsection{Related Work}
\label{sec:related:work}

The idea of exploiting the rank of the error array
when decoding interleaved codes was presented
by Metzner and Kapturowski in~\cite{MK} and
by Haslach and Vinck in~\cite{HV1}, \cite{HV2}.
Therein, the code $\code$ is chosen to be a linear
$[n,k,d]$ code over $F$, and, clearly, any combination of
block errors can be corrected as long as
their number does not exceed $(d-1)/2$. In~\cite{MK} and~\cite{HV1},
it was further assumed that the set of nonzero columns in
the (additive) $m \times n$ error array $E$ over $F$ is
linearly independent over $F$;
namely, the rank of $E$ (as a matrix over $F$) equals the number
of block errors. It was then shown that under this additional
assumption, it is possible to correct (efficiently) any pattern of
up to $d - 2$ block errors.
Essentially, the linear independence allows to easily locate
the nonzero columns in $E$, and from that point onward, the problem
reduces to that of erasure decoding.
A generalization to the case where the nonzero columns in $E$
are not necessarily full-rank was discussed in~\cite{HV2};
we will recall the latter result in mode detail
in Section~\ref{sec:phased}.

The case where the constituent code $\code$ is a GRS code
has been studied in quite a few papers,
primarily in the context where the contents of each block error
is assumed to be uniformly drawn from $F^m$.
In~\cite{BKY}, Bleichenbacher~\etal\
identified a threshold, $(m/(m{+}1))(d{-}1)$,
on the number of block errors, below which the decoding failure
probability approaches~$0$ as $d$ goes
to infinity and $n/q$ goes to~$0$.
A better bound on the decoding error probability was obtained by
Kurzweil~\etal~\cite{KSH} and by Schmidt~\etal~\cite{SSB0, SSB}.
See also
Brown~\etal~\cite{BMS},
Coppersmith and Sudan~\cite{CS},
Justesen~\etal~\cite{JTH},
Krachkovsky and Lee~\cite{KL},
and Wachter--Zeh~\etal~\cite{WZZB}.

Turning to the main coding problem studied in this paper---%
namely, handling combinations of symbol errors and block errors---%
a general solution was given by
Zinov'ev~\cite{Zinovev} and Zinov'ev and Zyablov~\cite{ZZ},
using concatenated codes and
their generalizations. Specifically, 
when using an (ordinary) concatenated code,
the columns of the $m \times n$ array are set to be codewords of
a linear $[m,\Kin,\Din]$ inner code over $F$,
and each of these codewords is the result of
an encoding of a coordinate of an outer codeword
of a second linear $[n,\Kout,\Dout]$ code over $\GF(q^\Kin)$.
It follows from the analysis in~\cite{Zinovev} and~\cite{ZZ}
that any error pattern of up to $\vartheta$ symbol errors
and $\tau$ block errors can be correctly decoded, whenever
\begin{align*}
    2 \vartheta + 1 
      &\le
         \Din (\Dout - 2 \tau) \; .
\end{align*}
Furthermore, such error patterns can be efficiently decoded,
provided that the inner and outer codes have
efficient bounded-distance error--erasure decoders.

Note that (in the nontrivial case) when $\vartheta > 0$,
the rate of the inner code must be (strictly) smaller than $1$.
This, in turn, implies that the overall redundancy of
the concatenated code has to grow (at least) linearly with $n$.
A generalized concatenated (GC) code allows
to circumvent this impediment.
We briefly describe the approach, roughly following
the formulation of Blokh and Zyablov~\cite{BZ}.
Given a number $\tau$ of block errors and a number $\vartheta$
of symbol errors that need to be corrected, let the integer sequences
\begin{alignat*}{3}
  1
    &=\,&\Din_0
    &< \Din_1
    &< \cdots
    &< \Din_v \; , \\
    &&
  \Dout_0
    &\ge
       \Dout_1
    &\ge
       \cdots
    &\ge
       \Dout_v
\end{alignat*}
satisfy, for every $i = 0, 1, \ldots, v$,
\begin{align}
\label{eq:generalizedconcatenated}
    2 \vartheta + 1 
      &\le
         \Din_i (\Dout_i - 2 \tau)
\end{align}
(thus, $\Dout_0 \ge 2(\tau+\vartheta) +1$ and $\Dout_v \ge 2\tau + 1$).  For
$i = 1, 2, \ldots, v$, let $\Hin_i$ be an $\Rin_i \times m$ parity-check
matrix of a linear $[m,\Kin_i{=}m{-}\Rin_i,\Din_i]$
code over $F$. We further
assume that these codes are (strictly) nested, so that $\Hin_{i-1}$ forms a
proper $\Rin_{i-1} \times m$ sub-matrix of $\Hin_i$; the matrix formed by the
remaining $\Rin_i - \Rin_{i-1}$ rows of $\Hin_i$ will be denoted by $\partial
\Hin_i$ (we formally define $\Rin_0 = 0$, along with setting $\Hin_0$ to be an
``empty'' $\Rin_0 \times n$ matrix). Then an $m \times n$ array over $F$ is a
codeword of the generalized concatenated code, if and only if the following
two conditions hold:
\begin{list}{}{\settowidth{\labelwidth}{\textup{(G2)}}%
               \settowidth{\leftmargin}{\textup{(G2..)}}}
\item[\textup{(G1)}]
For $i = 1, 2, \ldots, v$, the (partial) syndromes 
of the columns with respect to the partial parity-check matrix
$\partial \Hin_i$ form a codeword of a code of length
$n$ over $F^{\Rin_i-\Rin_{i-1}}$
with minimum distance $\Dout_{i-1}$.
\item[\textup{(G2)}]
The columns of the array form a codeword of a code of length
$n$ over $F^m$ with minimum distance $\Dout_v$.
\end{list}
(Note that condition~(G2) could be incorporated into condition~(G1)
by extending the latter to $i = v+1$, with $\Hin_{v+1}$
taken as an $m \times m$ (nonsingular)
parity-check matrix of the trivial code $\{ \bldzero \}$.
Ordinary concatenated codes correspond to the case
where $v = 1$ and $\Dout_0$ is the ``minimum distance''~($> n$)
of the trivial code.)
It follows from~\cite{Zinovev} and~\cite{ZZ} that the above array code
construction has an efficient decoder that corrects any pattern
of up to $\tau$ block errors and up to $\vartheta$ symbol errors.
See also~\cite{AH},
\cite{KHSN},
\cite{RS},
\cite{WC},
and the survey~\cite{Dumer}.

Recently, Blaum~\etal~\cite{BHH} have proposed
new erasure-correcting codes for combined block--symbol
error patterns. The advantage of their scheme is having
the smallest possible redundancy
(equaling the largest total number of symbols that can be erased)
and an efficient \emph{erasure} decoding algorithm.
However, the parameters of their constructions are
rather strongly limited:
first, the array size is typically much smaller than $q$
(and, in one application,
must in fact be smaller than $\log_2 q$),
and, secondly, verifying whether the construction actually works
for given parameters becomes intractable, unless 
the number of block erasures or the number of symbol erasures
is very small.

In~\cite{GYD}, Gabrys~\etal\ presented a coding scheme which is targeted
mainly at applications for flash memories. In their setting, an erroneous
column may have at most a prescribed number $\ell$ of symbol errors; and in
addition to limiting the total number of erroneous columns, a further
restriction is assumed on the number of columns with at most a prescribed
number $\ell' \; (< \ell)$ of symbol errors.

In Section~\ref{sec:examples} we will compare our coding scheme with the most
relevant of the above-mentioned coding schemes.

\subsection{Notation}
\label{sec:notation}

This subsection lists the notation that will be used throughout the paper.
More specialized notation will be introduced later on when needed.

For integers $a$ and $b$ with $0 \leq a < b$,
we denote by $\Interval{a,b}$
the set of integers $\{ a, a + 1, a + 2, \ldots, b - 1 \}$,
and $\Interval{b}$ will be used as a shorthand notation
for $\Interval{0,b}$.  Entries of vectors will be indexed
starting at~$0$, and so will be the rows and columns of matrices.  For a
vector $\bldu \in F^n$ and a subset $W \subseteq \Interval{n}$, we let
$(\bldu)_W$ be the sub-vector (in $F^{|W|}$) of $\bldu$ that is
indexed by $W$.
The support of $\bldu$ will be denoted by $\Support(\bldu)$.
We extend these definitions to any $m \times n$ matrix $E$ over $F$,
with $(E)_W$ denoting the $m \times |W|$ sub-matrix of $E$
that is formed by the columns that are indexed by $W$.
Column~$j$ of $E$ will be denoted by $E_j$, and $\Support(E)$
will stand for the column support of $E$, namely, the set of
indexes $j$ for which $E_j \ne \bldzero$.
The linear subspace of $F^m$ that is spanned by the columns of $E$
will be denoted by $\colspan(E)$.

The ring of polynomials in the indeterminate $x$ over $F$ will be
denoted by $F[x]$, and
the ring of bivariate polynomials in $y$ and $x$ over $F$ will be
denoted by $F[y,x]$.\footnote{We prefer the ordering $y,x$ over $x,y$
because the powers of $y$ and the powers of $x$ will be
associated with, respectively,
the rows and columns of $m \times n$ matrices like $E$.}  For
a nonzero bivariate polynomial
$\varphi(y,x) = \sum_i \varphi_i(y) x^i$ in
$F[y,x]$, we will let $\deg_x \varphi(y,x)$ stand for
the $x$-degree of $\varphi(y,x)$, namely, the largest index $i$
for which $\varphi_i(y) \ne 0$.
The $y$-degree is defined in a similar manner.
The notation $F_{m,n}(y,x)$ will stand for
the set of all bivariate polynomials $\varphi(y,x) \in F[y,x]$
with $\deg_y \varphi(y,x) < m$ and $\deg_x \varphi(y,x) < n$.
For an element $\xi \in F$,
we denote by $\uu_m(y;\xi)$ the polynomial
$\sum_{i \in \Interval{m}} \xi^i y^i$.

With any $m \times n$ matrix
$E = (e_{h,j})_{h \in \Interval{m},j \in \Interval{n}}$ over $F$,
we associate the bivariate polynomial
\begin{align*}
E(y,x) &= \sum_{h \in \Interval{m},j \in \Interval{n}} e_{h,j} y^h x^j
\end{align*}
in $F_{m,n}(y,x)$ (namely, the powers of $y$ index the rows and
the powers of $x$ index the columns).
With each column~$j$ of $E$ we associate the univariate polynomial
$E_j(y) = \sum_{h \in \Interval{m}} e_{h,j} y^h$; thus,
$E(y,x) = \sum_{j \in \Interval{n}} E_j(y) x^j$.

\section{Simplified Code Construction}
\label{sec:simplified:code:construction}

In this section we consider the simplified scenario mentioned in
Section~\ref{sec:paper:overview}. Namely, we consider only
block errors and erasures, \ie, no symbol errors or erasures.
Moreover, an $m \times n$ array forms a codeword of length $mn$
if and only if every row is a codeword in some prescribed code $\code$
with parameters $[n,k,d]$ (\ie, there are no bijective mappings
applied to the columns); equivalently, the array code considered
is simply an $m$-level interleaving of $\code$.
If $\code$ is specified by
an $(n{-}k) \times n$ parity-check matrix $H$, 
then the syndrome matrix $S$ is defined to be
the $m \times (n{-}k)$ matrix $S = Y H^\transpose$,
where the $m \times n$ matrix
\begin{align*}
 Y  &= \Gamma + E
\end{align*}
over $F$ represents the read out (or received) message,
where the $m \times n$ matrix $\Gamma$ over $F$ represents
the stored (or transmitted) codeword, and where
the $m \times n$ matrix $E$ over $F$ represents the alterations that
happen to $\Gamma$ over time (or during transmission).
Note that our formalism treats erasures like errors,
with the side information $K \subseteq \Interval{n}$
telling us their location.

The subsections of this section are structured as follows.
In Section~\ref{sec:phased} we study the error correction capabilities
of the interleaved array code, where $\code$ is
any linear $[n,k,d]$ code over $F$.
Then, in Section~\ref{sec:GRS}, we present an efficient decoder
for the special case where $\code$ is a GRS code.
Finally, in Section~\ref{sec:probabilistic},
we present an application of the efficient decoder of
Section~\ref{sec:GRS} for the probabilistic decoding
of the array code under the assumption that the block errors
are uniformly distributed over $F^m$.

\subsection{Block Errors and Erasures with Rank Constraints}
\label{sec:phased}

We start with Theorem~\ref{thm:generalizedmetzner} below
that generalizes the results of~\cite{MK} and~\cite{HV1} to
the case where the set of nonzero columns of
the $m \times n$ error array $E$ are not necessarily linearly
independent. Note that this theorem was already stated
(without proof) in the one-page abstract~\cite{HV2} for
the error-only case (\ie, no block erasures).
We include the proof of the theorem not just for the sake
of completeness, but also because the proof technique will
be useful in Section~\ref{sec:main:code:construction} as well.

Toward proving this theorem, the following lemma will be helpful.

\begin{lemma}
\label{lem:generalizedmetzner}

Let $\code$ be a linear $[n,k,d]$ code over $F$ and
let $Z$ be a nonzero $m \times n$ matrix over $F$ such that each row
in $Z$ is a codeword of $\code$.  Then
  \begin{align*}
    \left|
      \Support(Z)
    \right|
    -
    \rank(Z)
      &\ge
         d - 1 \; .
  \end{align*}
\end{lemma}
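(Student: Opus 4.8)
The plan is to exploit two invariances: both $\rank(Z)$ and the column support $\Support(Z)$ are unchanged when $Z$ is multiplied on the left by an invertible matrix over $F$ (for the support, this is because such a multiplication sends a column to $\bldzero$ if and only if that column was already $\bldzero$). Writing $\rho = \rank(Z)$, so that $\rho \ge 1$ since $Z \ne \bldzero$, I would therefore first replace the rows of $Z$ by a basis $r_1, \ldots, r_\rho$ of its row space. These are $\rho$ linearly independent codewords of $\code$, and since a column of $Z$ is zero exactly when the corresponding coordinate vanishes on the whole row space, the union of the (vector) supports of $r_1, \ldots, r_\rho$ equals precisely the column support $\Support(Z)$.

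Next I would bring this basis into systematic form. Let $R$ be the $\rho \times n$ matrix with rows $r_1, \ldots, r_\rho$; it has rank $\rho$, hence possesses $\rho$ linearly independent columns indexed by some set $J = \{j_1, \ldots, j_\rho\} \subseteq \Interval{n}$, and as these columns are nonzero we have $J \subseteq \Support(Z)$. Left-multiplying $R$ by the inverse of the invertible submatrix $(R)_J$ produces codewords $\tilde r_1, \ldots, \tilde r_\rho$ that are systematic on $J$: the $i$-th one equals $1$ in coordinate $j_i$ and $0$ in every other coordinate of $J$. By the two invariances noted above, these codewords still have combined support equal to $\Support(Z)$.

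Finally I would count. Each $\tilde r_i$ is a nonzero codeword of $\code$ (it is $1$ at $j_i$), so $|\Support(\tilde r_i)| \ge d$; since $\tilde r_i$ meets $J$ only in $\{j_i\}$, at least $d-1$ of its support coordinates lie outside $J$, and all of them lie in $\Support(Z)$. Hence $|\Support(Z) \setminus J| \ge d-1$, and because $J \subseteq \Support(Z)$ with $|J| = \rho$ this yields $|\Support(Z)| - \rank(Z) \ge d-1$, as claimed. The only step that is more than bookkeeping is the invariance of the column support under row operations, together with the observation that the systematic pivot set $J$ can be chosen inside $\Support(Z)$; once these are in place, a single systematic codeword already supplies the required $d-1$ extra coordinates, and the union over all $i$ is not even needed.
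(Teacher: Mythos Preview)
Your proof is correct. The paper's proof is a one-liner: it observes that the rows of $(Z)_{\Support(Z)}$ span a linear $[\,|\Support(Z)|,\rank(Z),{\ge}d\,]$ code over $F$ and applies the Singleton bound to that code. Your argument is essentially the standard proof of the Singleton bound unfolded in this specific context: you bring a generator matrix of that same row-space code to systematic form on a pivot set $J$ of size $\rank(Z)$, and then a single systematic row---being a nonzero codeword of $\code$ with only one support position inside $J$---already forces $|\Support(Z)\setminus J|\ge d-1$. So the two approaches are the same idea at different levels of abstraction; the paper's is terser, while yours is self-contained and does not require the reader to recognize the Singleton bound by name.
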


\begin{proof}
Write $J = \Support(Z)$, $\Tau = |J|$, and $\mu = \rank(Z)$,
and apply the Singleton bound to the linear $[\Tau,\mu,{\ge}d]$ code
over $F$ that is spanned by the rows of $(Z)_J$.
\end{proof}

\begin{theorem}
\label{thm:generalizedmetzner}

Let $\code$ be a linear $[n,k,d]$ code over $F$ and let $H$ be
an $(n{-}k) \times n$ parity-check matrix of $\code$ over $F$.
Fix $K$ to be a subset of $\Interval{n}$ of size $\Rho$.
Given any $m \times (n{-}k)$ (syndrome) matrix $S$ over $F$,
there exists at most one $m \times n$ matrix $E$ over $F$ that
has the following properties:
\begin{list}{}{\settowidth{\labelwidth}{\textup{(ii)}}}

\item[\textup{(i)}]
$S = E H^\transpose$~, and

\item[\textup{(ii)}]
writing $\overline{K} = \Interval{n} \setminus K$, the values
$\Tau = \left|\Support \left( (E)_{\overline{K}} \right) \right|$ and
$\mu = \rank \bigl( (E)_{\overline{K}} \bigr)$ satisfy
  \begin{align}
    \label{eq:generalizedmetzner}
    2 \Tau + \Rho
      &\le
         d + \mu - 2 \; .
  \end{align}

\end{list}
\end{theorem}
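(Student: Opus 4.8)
The plan is to argue by contradiction. Suppose there were two distinct matrices $E$ and $E'$ that both satisfy (i) and (ii) for the same $S$ and $K$, and set $Z = E - E'$. Since $E H^\transpose = S = E' H^\transpose$, we get $Z H^\transpose = \bldzero$, so every row of $Z$ lies in $\code$; as $Z \neq \bldzero$, Lemma~\ref{lem:generalizedmetzner} applies and yields $|\Support(Z)| - \rank(Z) \ge d - 1$. The entire proof then reduces to showing that the two instances of~\eqref{eq:generalizedmetzner} force the reverse inequality $|\Support(Z)| - \rank(Z) \le d - 2$, which is the sought contradiction.

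Write $\Tau,\mu$ (respectively $\Tau',\mu'$) for the quantities in (ii) attached to $E$ (respectively $E'$), and split the columns according to $K$ and $\overline{K}$. On $K$ there are at most $\Rho$ nonzero columns, which contribute at most $\Rho$ to $|\Support(Z)|$; when lower-bounding the rank I simply discard these columns, using $\rank(Z) \ge \rank\bigl((Z)_{\overline{K}}\bigr)$. It therefore suffices to control the punctured difference $(Z)_{\overline{K}} = (E)_{\overline{K}} - (E')_{\overline{K}}$.

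The crux — and the step I expect to be the main obstacle — is a sufficiently tight lower bound on $\rank\bigl((Z)_{\overline{K}}\bigr)$, since the naive bound $\rank \ge |\mu - \mu'|$ turns out to be too weak. Let $o = \bigl|\Support\bigl((E)_{\overline{K}}\bigr) \cap \Support\bigl((E')_{\overline{K}}\bigr)\bigr|$ count the shared nonzero columns. On those columns of $\overline{K}$ where $(E')_{\overline{K}}$ vanishes, $(Z)_{\overline{K}}$ coincides with $(E)_{\overline{K}}$; this submatrix is $(E)_{\overline{K}}$ with its $o$ commonly supported columns deleted, so its rank is at least $\mu - o$, and hence $\rank\bigl((Z)_{\overline{K}}\bigr) \ge \mu - o$. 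By the symmetric argument $\rank\bigl((Z)_{\overline{K}}\bigr) \ge \mu' - o$, so $\rank\bigl((Z)_{\overline{K}}\bigr) \ge \max(\mu,\mu') - o$. Combined with the obvious $\bigl|\Support\bigl((Z)_{\overline{K}}\bigr)\bigr| \le \Tau + \Tau' - o$, the overlap term cancels and I obtain
\[
  \bigl|\Support\bigl((Z)_{\overline{K}}\bigr)\bigr| - \rank\bigl((Z)_{\overline{K}}\bigr) \le \Tau + \Tau' - \max(\mu,\mu') \; .
\]

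Restoring the at most $\Rho$ columns indexed by $K$ gives $|\Support(Z)| - \rank(Z) \le \Tau + \Tau' + \Rho - \max(\mu,\mu')$, which bounds the left-hand side simultaneously by $\Tau + \Tau' + \Rho - \mu$ and by $\Tau + \Tau' + \Rho - \mu'$. To finish, I compare with~\eqref{eq:generalizedmetzner}: if $\Tau' \le \Tau$, then $\Tau + \Tau' + \Rho - \mu \le 2\Tau + \Rho - \mu \le d - 2$ by the inequality for $E$, whereas if $\Tau \le \Tau'$, then $\Tau + \Tau' + \Rho - \mu' \le 2\Tau' + \Rho - \mu' \le d - 2$ by the inequality for $E'$. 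One of these two cases always holds, so $|\Support(Z)| - \rank(Z) \le d - 2$, contradicting the bound $\ge d-1$ from Lemma~\ref{lem:generalizedmetzner}. Hence $Z = \bldzero$, i.e.\ $E = E'$, proving uniqueness.
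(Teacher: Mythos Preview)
Your proof is correct and follows essentially the same approach as the paper: both argue by contradiction, form $Z = E - E'$, and the heart of both arguments is the same rank--overlap inequality $\rank(Z) \ge \max(\mu,\mu') - o$ (the paper writes this as $\rank(Z) \ge \mu_{\max} - |Q|$), which combines with the support bound to cancel the overlap term and yield $|\Support(Z)| - \rank(Z) \le d-2$, contradicting Lemma~\ref{lem:generalizedmetzner}. The only organizational difference is that the paper first treats the case $K = \emptyset$ and then reduces the general case to it by puncturing $\code$ on the positions in $K$ (using that the punctured code has minimum distance at least $d-\Rho$), whereas you handle $K$ directly by bounding the $K$-columns' contribution to $|\Support(Z)|$ by $\Rho$ and discarding them in the rank estimate via $\rank(Z) \ge \rank\bigl((Z)_{\overline{K}}\bigr)$; both routes are equally valid and of comparable length.
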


\begin{proof}
We consider first the case where $K$ is empty. The proof is by
contradiction. So, assume that $E$ and $\hE$ are
two distinct $m \times n$ matrices over $F$
that satisfy conditions (i)--(ii).  Write
  \begin{align*}
    \Tau_{\max}
      &= \max
           \big\{
             |\Support(E)|, |\Support(\hE)|
           \big\} \; , \\
    \mu_{\max}
      &= \max
           \big\{ \rank(E), \rank(\hE) \big\}
             \; ,
  \end{align*}
  and define
  \begin{align*}
    J
      &= \Support(E) \; , \\
    \hJ
      &= \Support(\hE) \; , \\
    Q
      &= \Support(E) \cap \Support(\hE) \; .
  \end{align*}
Consider the array $Z = E - \hE$. By condition~(i) we get
that every row in $Z$ is a codeword of $\code$.  Now,
\begin{align*}
  |\Support(Z)|
    &\le
       |J| + |\hJ| - |Q| \; , \\
  \rank(Z)
    &\ge
       \mu_{\max} - |Q| \; ,
\end{align*}
and, so,
\begin{align*}
  |\Support(Z)| - \rank(Z)
    &\le |J| + |\hJ| - \mu_{\max} \\
    &\le 2 \Tau_{\max} - \mu_{\max} \\
    &\le d - 2 \; ,
\end{align*}
where the last inequality follows from condition~(ii). Hence, by
Lemma~\ref{lem:generalizedmetzner} we conclude that $Z = 0$,
namely, $E = \hE$, which is a contradiction to the initial assumption.
  
Next, we consider the case where $\Rho = |K| > 0$.  First, note that
condition~(ii) implies that
$\Rho \le d + \mu - 2 - 2 \Tau \le d - t - 2$;
in particular, every subset of $\Rho$ columns in $H$ is linearly
independent.
By applying elementary linear operations to the rows of $H$,
we can assume without loss of generality that the first $\Rho$ rows of
$(H)_K$ contain the identity matrix, whereas
the remaining $n{-}k{-}\Rho$
rows in $(H)_K$ are all-zero. Let $\tH$ be
the $(n{-}k{-}\Rho) \times (n{-}\Rho)$ matrix which consists of
the last $n - k - \Rho$ rows of $(H)_{\overline{K}}$:
the matrix $\tH$ is a parity-check matrix of
the linear $[n{-}\Rho,k]$ code $\tcode$ over $F$ obtained by
puncturing $\code$ on the positions that are indexed by $K$.
Let $\tS$ be the $m \times (n{-}k{-}\Rho)$ matrix which consists of
the last $n - k - \Rho$ columns of $S$. We have
  \begin{align}
    \label{eq:Stilde}
    \tS
      &= (E)_{\overline{K}} \tH^\transpose \; .
  \end{align}
Replacing~(i) by~(\ref{eq:Stilde}) and $\code$ by $\tcode$,
we have reduced to the case where $K$ is empty.
The result follows by recalling that the minimum distance of
$\tcode$ is at least $d-\Rho$.
\end{proof}

The proof of Theorem~\ref{thm:generalizedmetzner}
in~\cite{MK} and~\cite{HV1}, which was for the special case
$\Rho = 0$ and $\mu = \Tau = |\Support(E)|$, was carried out by
introducing an efficient algorithm for decoding up to
$\Tau \le d - 2$ errors.  In that algorithm,
Gaussian elimination is performed on the columns of $S$,
resulting in an $m \times (n{-}k)$ matrix $S P^\transpose$,
for some invertible $(n{-}k) \times (n{-}k)$ matrix $P$ over $F$,
such that the first $\mu$ columns in $S P^\transpose$ form
a linearly independent set while the last $n - k - \mu$ columns in $S
P^\transpose$ are all-zero. (As a matter of fact, through this Gaussian
elimination, one finds the value of $\mu$.) From condition~(i)
in Theorem~\ref{thm:generalizedmetzner}
we then get that
\begin{align}
  \label{eq:SP}
  S P^\transpose
    &= E (P H)^\transpose \; .
\end{align}
Let $H'$ be the $(n{-}k{-}\mu) \times n$ matrix that is formed by
the last $n - k - \mu$ rows of $P H$.
It follows from~(\ref{eq:SP}) that the columns of $H'$
that are indexed by $\Support(E)$ must be all-zero.
Furthermore, all the remaining columns in $H'$ must be nonzero,
or else we would have $\mu + 1 < d$ linearly dependent columns in $H$.
It follows that $\Support(E)$ is the unique subset
$U \subseteq \Interval{n}$ of size $\mu$ such that $(H')_U$ is
all-zero.  Once the decoder identifies $\Support(E)$,
the entries of $E$ can be found by solving linear equations.
This decoding algorithm can be generalized to handle
the case where $\Rho = |K| > 0$, in the spirit of the last part of
the proof of Theorem~\ref{thm:generalizedmetzner}: replace $E$,
$H$, and $S$ by $(E)_{\overline{K}}$, $\tH$, and $\tS$,
respectively.

As pointed out in~\cite{HV2}, when $K$ is empty and
the difference $\Tau - \mu = |\Support(E)| - \rank(E)$ is assumed
to be equal to some nonnegative integer $b$,
then the decoding algorithm of~\cite{MK} and~\cite{HV1} can be
generalized into finding a subset
$U \subseteq \Interval{n}$ of size $\Tau \le (d+\mu)/2 - 1$
such that $\colspan\bigl( (H')_U \bigr) \; (\subseteq F^{n-k-\mu})$
has dimension $b$.  Letting $V$ be a subset of $U$ of size $b$
such that $\rank\bigl( (H')_V \bigr) = b$,
the subset $W = U \setminus V$ will then point at
$\Tau - b = \mu$ columns of $E$ that form a basis of $\colspan(E)$;
these $\mu$ columns, in turn, are flagged as $\mu$ erasures.
We will then be left with
$|\Support(E) \setminus W| = \Tau - \mu = b$ nonzero columns in $E$
which are yet to be located, but these can be found by
applying to the received array, row by row,
any bounded-distance error--erasure decoder $\decoder$ for $\code$.
Indeed, since
\begin{align*}
  2 b + \mu
    &= 2 \Tau - \mu \le d-2 \; ,
\end{align*}
such a decoder can uniquely recover $E$ given the set $W$ of erasure
locations.  The extension to $\Rho = |K| > 0$ is straightforward.

Note, however, that even when we are allowed to apply
the decoder $\decoder$
at no computational cost, we do not know how to find the subset $U$
efficiently as $b$ becomes large.  In fact, if $b$ is large and $\mu$ is
small, we may instead enumerate over the set $W$ which indexes
a basis of $\colspan(E)$,
then use $\decoder$ to reconstruct a candidate for $E$, and finally
verify that we indeed have $\rank(E) = \rank\bigl( (E)_W \bigr) = \mu$.

\subsection{The GRS Case}
\label{sec:GRS}

In this section, we present an efficient decoder for finding
the error matrix $E$ under
the conditions of Theorem~\ref{thm:generalizedmetzner}, for the
special case where $\code$ is a generalized Reed--Solomon (GRS) code
(this decoder will then be used as a subroutine in one of the decoders
to be presented in Section~\ref{sec:decoding:algorithms}).
Specifically, hereafter in this section, we fix $\code$ to be an
$[n,k,d{=}n{-}k{+}1]$ GRS code $\code_\GRS$ over $F$ with
a parity-check
matrix
\begin{align*}
  H_\GRS
    &= \left( \, 
         \alpha_j^i \,
       \right)_{i \in \Interval{d-1}, j \in \Interval{n}} \; ,
\end{align*}
where $\alpha_0, \alpha_1, \ldots, \alpha_{n-1}$
are distinct nonzero elements of $F$
  (without real loss of generality, and for the sake of simplicity,
  we restrict ourselves here to GRS codes where the column
  multipliers---as defined in~\cite[p.~148]{Roth}---are all~$1$).

Let
$E = (e_{h,j})_{h \in \Interval{m}, j \in \Interval{n}}$ be
an $m \times n$ (error) matrix over $F$, and let $K$ and $J$ be
disjoint subsets of $\Interval{n}$ such that
\begin{align*}
  J
    &\subseteq
       \Support(E)
     \subseteq
       K \cup J
\end{align*}
(the set $J$ indexes the erroneous columns and $K$ indexes the erasure
locations). For the matrix $E$, define the syndrome array as
the following $m \times (d{-}1)$ matrix $S$ (equivalently,
the bivariate polynomial $S(y,x) \in F_{m,d-1}(y,x)$):
\begin{align*}
  S
    &= E H_\GRS^\transpose
\end{align*}
(compare with condition~(i) in Theorem~\ref{thm:generalizedmetzner}).
In addition, define the error-locator polynomial $\Lambda(x)$ and the
erasure-locator polynomial $\Mu(x)$ by
\begin{align*}
  \Lambda(x)
    &= \prod_{j \in J}
         (1 - \alpha_j x) \; , \\
  \Mu(x)
    &= \prod_{j \in K}
         (1 - \alpha_j x) \; ,
\end{align*}
respectively. Also, let the modified syndrome array be
the unique $m \times (d{-}1)$ array $\sigma$ over $F$ (namely,
the unique bivariate polynomial $\sigma(y,x)$ in
$F_{m,d-1}(y,x)$) that satisfies the polynomial congruence
\begin{align*}
  \sigma(y,x)
    &\equiv S(y,x) \, \Mu(x) \quad (\mod \; x^{d-1}) \; .
\end{align*}
Finally, the (bivariate) error-evaluator polynomial
$\Omega(y,x) = \sum_{h \in \Interval{m}} \Omega_h(x) y^h$ is defined by
\begin{align*}
  \Omega_h(x)
    &= \sum_{j \in K \cup J}
         e_{h,j}
         \prod_{j' \in (K \cup J) \setminus \{ j \}}
           (1 - \alpha_{j'} x) \; ,
             \quad h \in \Interval{m} \; .
\end{align*}

\begin{lemma}
\label{lem:GRS}

Write $\Tau = |J|$, $\Rho = |K|$, and $\mu = \rank\bigl( (E)_J \bigr)$, 
and suppose that
  \begin{align*}
    2 \Tau + \Rho
      &\le
         d + \mu - 2
  \end{align*}
(see~(\ref{eq:generalizedmetzner})).
Let $\lambda(x)$ be a polynomial in $F[x]$
and $\omega(y,x) = \sum_{h \in \Interval{m}} \omega_h(x) y^h$ be
a bivariate polynomial (of $y$-degree less than $m$) in $F[y,x]$
such that the following conditions are satisfied:
\begin{list}{}{\settowidth{\labelwidth}{\textup{(P2)}}%
               \settowidth{\leftmargin}{\textup{(P2..)}}}
\item[\textup{(P1)}]
  $\sigma(y,x) \lambda(x) \equiv \omega(y,x) \quad (\mod \; x^{d-1})$~,
  and

\item[\textup{(P2)}]
  \begin{minipage}[t]{\linewidth}
      $\deg \lambda(x)
        \le (d{+}\mu{-}\Rho)/2 - 1$ and \\
      $\deg_x \omega(y,x)
        <   (d{+}\mu{+}\Rho)/2 - 1$~.
    \end{minipage}

\end{list}
Then there is a polynomial $u(x) \in F[x]$ such that
\begin{align*}
    \lambda(x)
      &= \Lambda(x) u(x) \; , \\
    \omega(y,x)
      &= \Omega(y,x) u(x) \; .
\end{align*}
\end{lemma}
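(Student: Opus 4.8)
The plan is to show first that the ``genuine'' pair $(\Lambda,\Omega)$ is itself a solution of (P1)--(P2), and then that any admissible $(\lambda,\omega)$ must be a scalar-polynomial multiple of it. For the first part I would expand the syndrome as a sum of truncated geometric series: writing $E_j(y)=\sum_{h} e_{h,j}y^h$, one checks $S(y,x)\equiv\sum_{j\in K\cup J}E_j(y)(1-\alpha_j x)^{-1}\pmod{x^{d-1}}$. Multiplying by the full locator $\Lambda(x)\Mu(x)=\prod_{j\in K\cup J}(1-\alpha_j x)$ clears every denominator and leaves exactly $\sum_{j\in K\cup J}E_j(y)\prod_{j'\in(K\cup J)\setminus\{j\}}(1-\alpha_{j'}x)=\Omega(y,x)$; since $\sigma\equiv S\Mu$, this gives the key congruence $\sigma\Lambda\equiv\Omega\pmod{x^{d-1}}$. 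A one-line degree count then shows $\deg\Lambda=\Tau$ and $\deg_x\Omega=\Tau+\Rho-1$ obey (P2) precisely because $2\Tau+\Rho\le d+\mu-2$.

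Second, given an arbitrary solution $(\lambda,\omega)$, I would multiply $\sigma\Lambda\equiv\Omega$ by $\lambda$ and $\sigma\lambda\equiv\omega$ by $\Lambda$ and subtract, eliminating $\sigma$ to obtain $\Omega(y,x)\lambda(x)\equiv\omega(y,x)\Lambda(x)\pmod{x^{d-1}}$. The main obstacle surfaces here: using (P2), the $x$-degree of $\Delta:=\Omega\lambda-\omega\Lambda$ is bounded only by $d+\mu-3$, which exceeds $d-2$ whenever $\mu\ge1$, so the congruence does \emph{not} by itself force $\Delta=0$. The $\mu$-sized slack is exactly what the rank hypothesis must absorb, and overcoming it is the heart of the argument.

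Third, the crux. Since $\Delta\equiv0\pmod{x^{d-1}}$ and $\deg_x\Delta\le d+\mu-3$, write $\Delta(y,x)=x^{d-1}Q(y,x)$ with $\deg_x Q\le\mu-2$, say $Q(y,x)=\sum_{h\in\Interval{m}}\sum_{i\in\Interval{\mu-1}}q_{h,i}\,y^h x^i$, and let $M=(q_{h,i})$ be its $m\times(\mu-1)$ coefficient matrix. For each $j\in J$ I would evaluate $\Delta$ at the root $x=\alpha_j^{-1}$ of $\Lambda$: the term $\omega\Lambda$ vanishes, while in $\Omega$ every summand except the $j$-th carries a factor $(1-\alpha_j x)$, so $\Omega(y,\alpha_j^{-1})=E_j(y)\prod_{j'\ne j}(1-\alpha_{j'}\alpha_j^{-1})$, a nonzero multiple of the column $E_j$. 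Reading coefficients of $y$ as vectors in $F^m$ gives $\beta_j E_j=\alpha_j^{-(d-1)}M\,a_j$, where $a_j=(\alpha_j^{-i})_{i\in\Interval{\mu-1}}$ and $\beta_j=0$ exactly when $\alpha_j^{-1}$ is a root of $\lambda$. Split $J=J_0\sqcup J_1$ accordingly. For $j\in J_1$ we get $E_j\in\colspan(M)$, so $\Span\{E_j:j\in J_1\}$ has dimension at most $\rank(M)$; since $\colspan\bigl((E)_J\bigr)$ has dimension $\mu$, this forces $\rank(M)\ge\mu-|J_0|$. For $j\in J_0$ we get $M\,a_j=\bldzero$. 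If the vectors $\{a_j:j\in J_0\}$ span $F^{\mu-1}$ then $M=0$; otherwise $|J_0|\le\mu-2$, and their independence (distinct-node Vandermonde) gives $\dim\ker M\ge|J_0|$, hence $\rank(M)\le(\mu-1)-|J_0|$, contradicting the previous bound. Either way $M=0$, i.e.\ $Q=0$ and $\Omega\lambda=\omega\Lambda$ as an identity in $F[y,x]$.

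Finally, I would extract $u$. Each linear factor $1-\alpha_j x$ of $\Lambda$ is irreducible and, by the evaluation above, does not divide $\Omega$ (as $\Omega(y,\alpha_j^{-1})\ne\bldzero$); since these factors are pairwise coprime and $\Lambda\mid\Omega\lambda$, it follows that $\Lambda\mid\lambda$, say $\lambda=\Lambda u$ with $u\in F[x]$. Substituting into $\Omega\lambda=\omega\Lambda$ and cancelling the nonzero polynomial $\Lambda$ yields $\omega=\Omega u$, as claimed.
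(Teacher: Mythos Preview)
Your argument is correct, and it takes a genuinely different route from the paper's proof. The paper does not attempt to control the ``excess'' polynomial $Q$ directly; instead, for each $\ell\in J$ it picks a row vector $\blda\in F^m$ so that $(\blda E)_U$ is supported only at $\ell$ for some rank-$\mu$ subset $U\subseteq J$ containing $\ell$. Applying $\blda$ collapses the problem to a single-row GRS key equation in which the number of errors drops to at most $\Tau-\mu+1$, and then the classical coprimality-plus-degree argument (the one you correctly note fails by a margin of $\mu$ in the full system) goes through and yields $(1-\alpha_\ell x)\mid\lambda(x)$. Ranging over $\ell$ gives $\Lambda\mid\lambda$.

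Your approach instead keeps all $m$ rows, writes $\Omega\lambda-\omega\Lambda=x^{d-1}Q$ with $\deg_x Q\le\mu-2$, and kills $Q$ by a rank--Vandermonde count: the columns $E_j$ with $\lambda(\alpha_j^{-1})\ne0$ lie in $\colspan(M)$, forcing $\rank(M)\ge\mu-|J_0|$, while the Vandermonde vectors $a_j$ for $j\in J_0$ sit in $\ker M$, forcing $\rank(M)\le(\mu{-}1)-|J_0|$ unless they already span $F^{\mu-1}$. This is clean and avoids the per-index reduction; conversely, the paper's reduction makes the role of the rank hypothesis perhaps more transparent (it literally removes $\mu-1$ error positions) and recycles the one-dimensional key-equation machinery verbatim. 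Both extractions of $u(x)$ at the end are essentially the same.
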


\begin{proof}
First, by the key equation of GRS decoding,
it is known that (P1)--(P2) are satisfied for
$\lambda(x) = \Lambda(x)$ and $\omega(y,x) = \Omega(y,x)$
(see, for example, \cite[Section~6.3 and pp.~207--208]{Roth}).
  
Now, let $\lambda(x)$ and $\omega(y,x)$ satisfy (P1)--(P2),
fix $\ell$ to be any index in $J$,
and let $U$ be a subset of $J$ of size $\mu$ such that $\ell \in U$
and $\rank\bigl( (E)_U \bigr) = \mu$ (recall that
$J \subseteq \Support(E)$ and, so,
$E_\ell \ne \bldzero$).
Let $\blda = (a_h)_{h \in \Interval{m}}$ be
a row vector in $F^m$ such that $(\blda E)_U$ is nonzero
on---and only on---position $\ell$.
  
Let $\hJ$ be the smallest subset of $\Interval{n}$ such that
\begin{align*}
    \hJ
      &\subseteq
         \Support(\blda E)
       \subseteq
         K \cup \hJ \; ;
\end{align*}
note that $\ell \in \hJ$ and that
$\hJ \subseteq \{ \ell \} \cup (J \setminus U)$ and, so,
$|\hJ| \le \Tau - \mu + 1$.  Define
\begin{align*}
    \hLambda(x)
      &= \prod_{j \in \hJ}
           (1 - \alpha_j x) \; , \\
    \hOmega(x)
      &= \frac{1}{\prod\limits_{j'' \in U \setminus \{ \ell \}}
         (1 - \alpha_{j''} x)}
         \cdot
         \sum_{h \in \Interval{m}}
           a_h \Omega_h(x) \\
      &= \sum_{j \in K \cup \hJ}
           (\blda E_j)
           \prod_{j' \in (K \cup \hJ) \setminus \{ j \}}
             (1 - \alpha_{j'} x) \; , \\
    \hsigma(x)
      &= \sum_{h \in \Interval{m}}
           a_h \sigma_h(x) \; , \\
    \homega(x)
      &= \sum_{h \in \Interval{m}}
           a_h \omega_h(x) \; .
\end{align*}
Observing that $\hsigma(x)$ is the modified syndrome
polynomial that corresponds to the row vector $\blda E$, we get from
the key equation of GRS decoding that
\begin{align}
    \label{eq:KE}
    \hLambda(x) \hsigma(x)
      &\equiv
         \hOmega(x)
           \quad (\mod \; x^{d-1}) \; .
\end{align}
Moreover,
\begin{align}
    \label{eq:gcd}
    \gcd
      (
        \hLambda(x), \hOmega(x)
      )
      &= 1 \; ,
\end{align}
\begin{alignat}{3}
    \label{eq:degrees1}
    \hspace{-0.25cm}
    \deg \hLambda(x)
      &= |\hJ|
      &\; \le \;& \Tau - \mu + 1
      &\; \le \;& \frac{d{-}\mu{-}\Rho}{2} \; , \\
    \label{eq:degrees2}
    \hspace{-0.25cm}
    \deg \hOmega(x)
      &< |K| + |\hJ|
      &\; \le \;& \Rho + \Tau - \mu + 1
      &\; \le \;& \frac{d{-}\mu{+}\Rho}{2} \; .
\end{alignat}
Multiplying both sides of~(\ref{eq:KE}) by $\lambda(x)$ we obtain
\begin{align*}
    \hLambda(x) \lambda(x) \hsigma(x)
      &\equiv
         \lambda(x) \hOmega(x)
           \quad (\mod \; x^{d-1}) \; .
\end{align*}
On the other hand, by~(P1) we have
\begin{align*}
    \lambda(x) \hsigma(x)
      &\equiv
         \homega(x)
           \quad (\mod \; x^{d-1}) \; ,
\end{align*}
and so, combining the last two congruences, we get
\begin{align*}
    \hLambda(x) \homega(x)
      &\equiv
         \lambda(x) \hOmega(x)
    \quad (\mod \; x^{d-1}) \; .
\end{align*}
Now, from~(P2) and~(\ref{eq:degrees1})--(\ref{eq:degrees2})
it follows that the degrees of the products on both sides of
the last congruence are less than $d - 1$.
Hence, this congruence is actually a polynomial equality:
\begin{align*}
    \hLambda(x) \homega(x)
      &= \lambda(x) \hOmega(x) \; .
\end{align*}
Thus, from~(\ref{eq:gcd}) we get that $\lambda(x)$ is divisible by
$\hLambda(x)$; in particular, $\lambda(x)$ is divisible by
$1 - \alpha_\ell x$.
Ranging now over all $\ell$ in $J$, we conclude that $\lambda(x)$ can
be written as $\Lambda(x) u(x)$ for some polynomial $u(x) \in F[x]$.
  
Finally, from~(P1) and the key equation we obtain
\begin{align*}
    \omega(y,x)
      &\equiv 
         \sigma(y,x)
         \lambda(x) \\
      &\equiv
         \sigma(y,x)
         \Lambda(x)
         u(x) \\
      &\equiv
         \Omega(y,x)
         u(x)
           \quad (\mod \; x^{d-1}) \; ,
\end{align*}
  from which the equality
$\omega(y,x) = \Omega(y,x) u(x)$ follows again by computing
degrees: from~(P2) we get
\begin{align*}
    \deg_x \omega(y,x)
      &<
         \frac{d+\mu+\Rho}{2} - 1
       \le
         d - 1
\end{align*}
and
\begin{align*}
      \lefteqn{
         \deg_x \big( \Omega(y,x) u(x) \big)
      } \makebox[6ex]{} \\
      &= \deg_x \Omega(y,x)
         +
         \big( \deg \lambda(x) - \deg \Lambda(x) \big) \\
      &< \deg \lambda(x)
         +
         \Rho \\
      &\le
         \frac{d+\mu+\Rho}{2} - 1 \\
      &\le
         d - 1 \; .
\end{align*}
This completes the proof.
\end{proof}

Fix $J$ and $K$ to be \emph{disjoint} subsets of $\Interval{n}$,
write $\Tau = |J|$ and $\Rho = |K|$, and let $E$ be
an $m \times n$ matrix of rank $\mu$ over $F$ such that
$J \subseteq \Support(E) \subseteq K \cup J$.  Define
\begin{align}
  \label{eq:decomposeS1}
  S^{(J)}
    = (E)_J \big( (H_\GRS)_J \big)^\transpose \; \phantom{.}
\end{align}
and
\begin{align}
  \label{eq:decomposeS2}
  S^{(K)}
     = (E)_K \big( (H_\GRS)_K \big)^\transpose \; .
\end{align}
Clearly, the syndrome array $S$ that corresponds to $E$ can be
decomposed into
\begin{align*}
  S
    &= S^{(J)} + S^{(K)} \; ,
\end{align*}
and after multiplying the respective bivariate polynomials by the
erasure-locator polynomial
$\Mu(x) = \prod_{j \in K}(1 - \alpha_j x)$, we get
\begin{align}
  \sigma(y,x)
    &\equiv
       S(y,x) \, \Mu(x) \nonumber \\
  \label{eq:sigma}
    &\equiv
       S^{(J)}(y,x) \, \Mu(x)
       +
       S^{(K)}(y,x) \, \Mu(x)
         \ (\mod \; x^{d-1}) \; .
\end{align}
Now, recall (from the key equation) that the coefficients of
$x^\Rho, x^{\Rho+1}, \ldots, x^{d-2}$ in $S^{(K)}(y,x) \, \Mu(x)$
are all zero, which means that the respective coefficients in
$\sigma(y,x)$ are equal to those in $S^{(J)}(y,x) \, \Mu(x)$.

\begin{figure}[ht!]
  \begin{myalgorithm}
    \small
    \textbf{Input:}
    \begin{itemize}
      
    \item Array $Y$ of size $m \times n$  over $F$.

    \item Set $K$ of size $\Rho$ of indexes of column erasures.

    \end{itemize}
    \textbf{Steps:}
    \begin{enumerate}
  
    \item
      \label{item:GRS1}
      Compute the $m \times (d{-}1)$ syndrome array
      \begin{align*}
        S
          &= Y H_\GRS^\transpose \; .
      \end{align*}
  
    \item
      \label{item:GRS2}
      Compute the modified syndrome array to be
      the unique $m \times (d{-}1)$
      matrix $\sigma$ that satisfies the congruence:
      \begin{align*}
        \sigma(y,x)
          &\equiv S(y,x) \, \Mu(x)
        \quad (\mod \; x^{d-1}) \; ,
      \end{align*}
      where
      \begin{align*}
        \Mu(x)
          &= \prod_{j \in K} (1 - \alpha_j x) \; .
      \end{align*}
      Let $\mu$ be the rank of the $m \times (d{-}1{-}\Rho)$
      matrix $\tS$ formed by the columns of $\sigma$ that are indexed by
      $\Interval{\Rho,d-1}$.
  
    \item
      \label{item:GRS3}
      Using the Feng--Tzeng algorithm, compute a polynomial
      $\lambda(x)$ of (smallest) degree
      $\Delta \le (d{+}\mu{-}\Rho)/2 - 1$ such that the following
      congruence is satisfied for some polynomial $\omega(y,x)$ with
      $\deg_x \omega(y,x) < \Rho + \Delta$:
      \begin{align*}
        \sigma(y,x) \lambda(x)
          &\equiv
             \omega(y,x)
               \quad (\mod \; x^{d-1}) \; .
      \end{align*}
      If no such $\lambda(x)$ exists or the computed $\lambda(x)$
      does not divide
      $\prod_{j \in \Interval{n}} (1 - \alpha_j x)$ then declare
      decoding failure and \textbf{Stop}.
  
    \item
      \label{item:GRS4}
      Compute the $m \times n$ error array $E$ by
        the following variant of Forney's formula
        for error values~\cite[p.~195]{Roth}:
      \begin{align*}
        E_j(y)
          &= \begin{cases}
               \frac{-\alpha_j \cdot \omega(y, \alpha_j^{-1})}
                 {\lambda'(\alpha_j^{-1})
                  \cdot
                  \Mu(\alpha_j^{-1})}
               & \textrm{if $\lambda(\alpha_j^{-1}) = 0$} \\
                 \frac{-\alpha_j \cdot \omega(y, \alpha_j^{-1})}
                 {\lambda(\alpha_j^{-1})
                  \cdot
                  \Mu'(\alpha_j^{-1})}
               & \textrm{if $j \in K$} \\
                 0 
               & \textrm{otherwise}
             \end{cases}
             \; ,
      \end{align*}
      where $(\cdot)'$ denotes formal differentiation.
  
    \end{enumerate}
    \textbf{Output:}
    \begin{itemize}
   
    \item Decoded array $Y - E$ of size $m \times n$.

    \end{itemize}
  \end{myalgorithm}
  \caption{Decoding of an $m$-level interleaving of a GRS code.
           (See Section~\ref{sec:GRS}.)}
  \label{fig:GRS}
\end{figure}

Let $\tS$ denote the $m \times (d{-}1{-}\Rho)$ matrix
\begin{align*}
  \tS
    &= (\sigma_{h,i})_{h \in \Interval{m},i\in \Interval{\Rho,d-1}} \; .
\end{align*}
It follows from~(\ref{eq:decomposeS1})--(\ref{eq:sigma}) that
\begin{align}
  \label{eq:Stildeagain}
  \tS
    &= (E)_J \bigl( (A_\Mu H_\GRS)_J \bigr)^\transpose \; ,
\end{align}
where $A_\Mu$ is a $(d{-}1{-}\Rho) \times (d{-}1)$ matrix over $F$
whose first row consists of the $\Rho + 1$ coefficients of
$\Mu(x)$ in decreasing order (padded with $d - 2 - \Rho$ zero entries), 
and each subsequent row is obtained from its predecessor by
a shift one position to the right (compare~(\ref{eq:Stildeagain})
with~(\ref{eq:Stilde})).  Hence, any column indexed by
$j \in J$ in the $(d{-}1{-}\Rho) \times \Tau$ matrix
$(A_\Mu H_\GRS)_J$ takes the form
\begin{align*}
  \alpha_j^\Rho
  \cdot
  \Mu(\alpha_j^{-1})
  \cdot
  \bigl(
    1 \; \alpha_j \; \alpha_j^2 \; \ldots \; \alpha_j^{d-2-\Rho}
  \bigr)^\transpose
\end{align*}
(where $\Mu(\alpha_j^{-1}) \ne 0$ since $K \cap J = \emptyset$).
Hence, under the assumption that $\Tau \le d - 1 - \Rho$
(which, in fact, holds whenever $2 \Tau + \Rho \le d + \mu - 2$)
we get that $\rank\bigl( (A_\Mu H_\GRS)_J \bigr) = \Tau$.
It now follows from~(\ref{eq:Stildeagain}) that
\begin{align}
  \label{eq:a}
  \colspan\bigl( (E)_J \bigr)
    &= \colspan(\tS)
\end{align}
(provided that $\Tau \le d - 1 - \Rho$);
equivalently, for every $\blda \in F^m$,
\begin{align*}
  \blda (E)_J
    &= \bldzero
  \quad \Longleftrightarrow \quad
  \blda \tS
     = \bldzero \; .
\end{align*}
In particular, $\rank\bigl( (E)_J \bigr) = \rank(\tS)$.  In fact,
every $m \times s$ sub-matrix of $\tS$ which consists of
$s \ge \Tau$ consecutive columns of $\tS$ has the same rank as $(E)_J$.

Given $\sigma(y,x) = \sum_{h \in \Interval{m}} \sigma_h(x) y^h$ and
the number of erasures $\Rho$, we can use
the Feng--Tzeng algorithm~\cite{FT}
(see also the related algorithms in~\cite{FT2, Wang, SB, ZW})
to find efficiently a polynomial
$\lambda(x) = \sum_{i=0}^\Delta \lambda_i x^i$ in $F[x]$ of
smallest degree $\Delta$ such that~(P1) is satisfied for some
$\omega(y,x) = \sum_{h \in \Interval{m}} \omega_h(x) y^h$ where $\deg_x
\omega(y,x) < \Rho + \Delta$.
In other words, for every $h \in \Interval{m}$,
the sequence $(\sigma_{h,i})_{i \in \Interval{\Rho,d-1}}$
satisfies the linear recurrence
\begin{align*}
  \sum_{i=0}^\Delta \lambda_i \sigma_{h,j-i}
    &= 0 \; ,
  \quad
  \Rho + \Delta
    \le
      j
    \le
      d - 2 \; .
\end{align*}
Under the assumption that $2 \Tau + \Rho \le d + \mu - 2$, we get from
Lemma~\ref{lem:GRS} that the polynomial $\lambda(x)$ found equals the
error-locator polynomial $\Lambda(x)$ (up to a normalization by a
constant). From the roots of $\lambda(x)$ one can then recover
the set $J$.

The decoding algorithm is summarized in Fig.~\ref{fig:GRS}.
Next, we analyze its complexity.
The syndrome computation step (Step~\ref{item:GRS1} in the figure)
can be carried out using $O(d m n)$ operations in $F$.
Step~\ref{item:GRS2} requires $O(d r m) = O(d^2 m)$ operations
to compute $\sigma(y,x)$ and
$O \bigl( d m \cdot \min(d,m) \bigr)$ operations
to compute the rank $\mu$.
The application of the Feng--Tzeng algorithm in
Step~\ref{item:GRS3} requires $O(d^2 m)$ operations, and, finally,
Step~\ref{item:GRS4} requires $O(d n)$ operations for
the Chien search and $O(d^2 m)$ operations for computing
the nonzero columns of $E$.
Overall, the decoding complexity amounts to
$O(d m n)$ operations for syndrome computation,
$O(d n)$ for the Chien search, and $O \bigl( d^2 m \bigr)$
for the remaining steps.

We note that
Lemma~\ref{lem:GRS} and the decoding algorithm in Fig.~\ref{fig:GRS} apply
also to the more general class of alternant codes over $F$, with $d$ now
standing for the designed minimum distance of the code.  Specifically, we
apply the lemma and the decoding algorithm to the underlying GRS code over the
appropriate extension field of $F$: that GRS code has minimum distance $d$ and
contains the alternant code as a sub-field sub-code.

\subsection{Application to Probabilistic Decoding}
\label{sec:probabilistic}

We next provide an application of the efficient decoding
algorithm of Fig.~\ref{fig:GRS} for the following channel model:
an $m \times n$ transmitted array over $F$ is subject to
at most a prescribed number $t$ of block errors (and no erasures),
such that the value (\ie, contents) of the block error in each
affected column is uniformly distributed over $F^m$, independently
of the other block-error values. (Note that in this formulation
of the channel, there is a probability of $1/q^m$ that
an affected column will in fact be error-free;
we have elected to define the channel this way in order to
simplify the analysis.)

We consider the decoding problem given that the $m \times n$ transmitted
array belongs to the code described in Section~\ref{sec:GRS},
namely, an $m$-level interleaving of an $[n,k,d{=}n{-}k{+}1]$ GRS code
over $F$. Let $J$ be the index set of  the columns that were affected
(possibly by an error-free block error), where $|J| \le t$,
and let $\mu$ be the rank of the error array $E$.
The decoder of Fig.~\ref{fig:GRS} will fail to decode
(or will decode incorrectly) \emph{only} when
the inequality in Lemma~\ref{lem:GRS} is violated, namely,
only when $\mu \le 2|J| - d + 1$.
Under the assumed statistical model on the error arrays,
it is easy to see that
\begin{align*}
\Prob \left\{ \rank(E) \le \mu \right\} 
  &= \Prob \left\{ \rank((E)_J) \le \mu \right\} \\
  & =
  q^{- (m-\mu)(|J|-\mu)} \cdot (1 + o(1))
\end{align*}
(see~\cite[p.~699]{MS}),
where $o(1)$ is an expression that goes to $0$
as $q \rightarrow \infty$.
Hence, when $m \ge d-1$,
the decoding failure probability of the algorithm
in Fig.~\ref{fig:GRS} is bounded from above,
up to a multiplicative factor $1 + o(1)$, by
\begin{align*}
  q^{- (m+d-1-2|J|)(d-1-|J|)} \le q^{- (m+d-1-2t)(d-1-t)} \; .
\end{align*}
For $m \ge d-1$, this bound is (considerably) better than
those given in~\cite{BKY} and~\cite{KSH}, and is comparable
to that in~\cite{SSB0, SSB} when $m$ is much larger than $d$.

\section{Main Code Construction}
\label{sec:main:code:construction}

\begin{table}
  \caption{Types of errors and erasures under consideration.}
  \label{table:error:and:erasure:types:1}

  \vspace{-0.25cm}

  \begin{center}
    \begin{tabular}{|c||c|c|}
      \hline
               & \parbox[c]{3cm}
                 {
                   \begin{center}
                     error
                   \end{center}
                 }
               & \parbox[c]{3cm}
                 {
                   \begin{center}
                     erasure
                   \end{center}
                 } \\
      \hline
      \hline
        block  & \parbox[c]{3cm}
                 {
                   \begin{center}
                     \textbf{(T1)} \\
                     column set $\JJ$ \\
                     $|\JJ| = \tau$
                   \end{center}
                 } 
               & \parbox[c]{3cm}
                 {
                   \begin{center}
                     \textbf{(T2)} \\
                     column set $\KK$ \\
                     $|\KK| = \rho$
                   \end{center}
                 } \\
      \hline
        symbol & \parbox[c]{3cm}
                 {
                   \begin{center}
                     \textbf{(T3)} \\
                     location set $\LL$ \\
                     $|\LL| = \vartheta$ 
                   \end{center}
                 }
               & \parbox[c]{3cm}
                 {
                   \begin{center}
                     \textbf{(T4)} \\
                     location set $\RR$ \\
                     $|\RR| = \varrho$
                   \end{center}
                 } \\
      \hline
    \end{tabular}
  \end{center}
\end{table}

We come now to the main code construction of this paper, namely the code
construction $\Code = (\code, H_\In)$ that was outlined in
Section~\ref{sec:introduction}. In particular,
Section~\ref{sec:code:definition} gives all
the details of the channel model and
the code construction, Section~\ref{sec:error:correction:capabilities}
presents the error and erasure correction capabilities of the code, and
Section~\ref{sec:examples} discusses a variety of examples based
on specific choices for the code $\code$ and
the matrix $H_\In$, and compares them with alternative code
constructions. (Decoding algorithms for $\Code$
will be discussed in Section~\ref{sec:decoding:algorithms}.)

\subsection{Channel Model and Code Definition}
\label{sec:code:definition}

We consider the following channel model. An $m \times n$ array $\Gamma$
over $F$ is stored (or transmitted), and $\Gamma$ is subject to
the following error and erasure types (see also
Table~\ref{table:error:and:erasure:types:1} and
Fig.~\ref{fig:memory:block:1}):
\begin{list}{}{\settowidth{\labelwidth}{\textup{(T4)}}%
               \settowidth{\leftmargin}{\textup{(T4..)}}}

\item[\textup{(T1)}]
\emph{Block errors:} a subset of columns in $\Gamma$ that
are indexed by $\JJ \subseteq \Interval{n}$ can be erroneous.

\item[\textup{(T2)}]
\emph{Block erasures:} a subset of columns in $\Gamma$ that
are indexed by $\KK \subseteq \Interval{n} \setminus \JJ$ can be erased.

\item[\textup{(T3)}]
\emph{Symbol errors:} a subset of entries in $\Gamma$
that are indexed by $\LL \subseteq \Interval{m} \times (\Interval{n}
\setminus (\KK \cup \JJ))$ can be erroneous.

\item[\textup{(T4)}]
\emph{Symbol erasures:} a subset of entries in $\Gamma$
that are indexed by
$\RR \subseteq \bigl( \Interval{m} \times (\Interval{n}
  \setminus \KK) \bigr) \setminus \LL$
can be erased.

\end{list}
Let the $m \times n$ matrix $\Epsilon$ over $F$ represent
the alterations that happen to $\Gamma$ over time
(or during transmission). Then the read out (or
received) message is given by the $m \times n$ matrix
\begin{align*}
  \Upsilon
    &= \Gamma + \Epsilon
\end{align*}
over $F$. Note that our formalism treats erasures like errors, with
the side information $\KK$ and $\RR$ telling us their location.
(Of course, the sets $\JJ$ and $\LL$ are not known to
the decoder \emph{a priori}.)

Write $\tau = |\JJ|$, $\rho = |\KK|$,
$\vartheta = |\LL|$, and $\varrho = |\RR|$.
The total number of symbol errors (resulting from error types~(T1)
and~(T3)) is at most $m \tau + \vartheta$ and the total number of symbol
erasures (resulting from erasure types~(T2) and~(T4)) is
$m \rho + \varrho$; hence,
we should be able to correct all error and erasure types
(T1)--(T4) (when occurring simultaneously) while using
a code of length $m n$ over $F$
with minimum distance at least
$m (2 \tau + \rho) + 2 \vartheta + \varrho + 1$.
However, such a strategy does not take into account the fact
that errors of type (T1)--(T2) are aligned across
the $m$ rows of the $m \times n$ array $\Gamma$.
The next construction is designed to take advantage of such 
an alignment.

\begin{definition}
  \label{def:main:code:construction}

Let $\code$ be a linear $[n,k,d]$ code over $F$, and let $H_\In$ be
an $m \times (m n)$ matrix over $F$ that satisfies
the following two properties for some positive integer $\delta$:
\begin{list}{}{\settowidth{\labelwidth}{\textup{(a)}}}
  
\item[\textup{(a)}]
Every subset of $\delta - 1$ columns in $H_\In$ is linearly independent 
(namely, $H_\In$ is a parity-check matrix of a linear
code over $F$ of length $m n$ and minimum distance at least $\delta$),
and
  
\item[\textup{(b)}]
writing
\begin{align*}
      H_\In
        &= \left(
             \begin{array}{c|c|c|c}
               H_0 & H_1 & \ldots & H_{n-1}
             \end{array}
           \right)
           \; ,
\end{align*}
with $H_0, H_1, \ldots, H_{n-1}$ being $m \times m$ sub-matrices of
$H_\In$, each $H_j$ is invertible over $F$.
  
\end{list}
Given $\code$ and $H_\In$, we define $\Code = (\code,H_\In)$ to be
the linear $[m n,m k]$ code over $F$ which consists of
all $m \times n$ matrices
\begin{align*}
    \Gamma
      &= \left(
           \begin{array}{c|c|c|c}
             \Gamma_0 & \Gamma_1 & \ldots & \Gamma_{n-1}
           \end{array}
         \right)
\end{align*}
over $F$ (where $\Gamma_j$ stands for column~$j$ of $\Gamma$) such
that each row in
\begin{align}
    \label{eq:HjGammaj}
    Z
      &= \left(
           \begin{array}{c|c|c|c}
             H_0 \Gamma_0 & H_1 \Gamma_1 & \ldots & H_{n-1} \Gamma_{n-1}
           \end{array}
         \right)
\end{align}
is a codeword of $\code$. \eqed
\end{definition}

One can view the code $\Code$ as a (generalized) concatenated code,
where the outer code is an $m$-level interleaving of $\code$,
such that an $m \times n$ matrix
\begin{align*}
  Z
    &= \left(
         \begin{array}{c|c|c|c}
           Z_0 & Z_1 & \ldots & Z_{n-1}
         \end{array}
       \right)
\end{align*}
over $F$ is an outer codeword if and only if each row in $Z$ belongs to
$\code$. Each column in $Z$ then undergoes encoding by
an inner encoder of rate one, where
the encoder of column~$j$ is given by the bijective mapping
$Z_j \mapsto H_j^{-1} Z_j$.

\subsection{Error Correction Capabilities}
\label{sec:error:correction:capabilities}

This subsection discusses what combinations of errors and erasures of
the types (T1)--(T4) can be handled by the code
$\Code = (\code, H_\In)$ that was
specified in Definition~\ref{def:main:code:construction}.

\begin{theorem}
\label{thm:combinederrors}

There exists a decoder for the code $\Code$ that correctly recovers
the transmitted array in the presence of errors of types~(T1)--(T4)
(which may occur simultaneously),
whenever $\tau \; (= |\JJ|)$, $\rho \; (= |\KK|)$,
$\vartheta \; (= |\LL|)$, and $\varrho \; (= |\RR|)$ satisfy
\begin{align*}
    2 \tau + \rho
      &\le
         d-2 \; , \\
    2 \vartheta + \varrho
      &\le
         \delta-1 \; .
\end{align*}
\end{theorem}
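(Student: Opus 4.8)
The plan is to reduce the combined error-and-erasure problem to the two single-layer results already available in the paper: Theorem~\ref{thm:generalizedmetzner} for the outer interleaved code $\code$, handling the block-level types~(T1)--(T2), and property~(a) of $H_\In$ (together with the rate-one inner encoders of property~(b)) for the symbol-level types~(T3)--(T4). The key observation is that the two layers decouple: block errors and erasures act on \emph{entire} columns of the array $\Upsilon$ and hence on entire columns of the transformed array $Z$ of~(\ref{eq:HjGammaj}), whereas symbol errors and erasures, by the definitions of $\LL$ and $\RR$ in (T3)--(T4), are confined to the columns \emph{outside} $\KK \cup \JJ$. So I would first peel off the block layer and then clean up the residual symbol layer.

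First I would set up the decomposition. Write $\Upsilon = \Gamma + \Epsilon$ and split the error array $\Epsilon$ into a block part $\Epsilon_{\mathrm{blk}}$ supported on the columns indexed by $\KK \cup \JJ$ and a symbol part $\Epsilon_{\mathrm{sym}}$ supported on $\LL \cup \RR$; by~(T3)--(T4) these supports are disjoint. Applying the inner transformation column-by-column (multiplying column~$j$ by $H_j$) turns $\Upsilon$ into an array whose rows would lie in $\code$ were it not for the alterations. The point is that a symbol alteration in a single entry, once multiplied by the invertible $H_j$, spreads over a whole column of the transformed array---so I would instead \emph{avoid} transforming the symbol-corrupted columns prematurely. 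The clean route is: treat the columns indexed by $\LL \cup \RR$ (at most $\vartheta + \varrho$ of them, hence at most $2\vartheta + \varrho \le \delta - 1$ altered entries lying in columns outside $\KK\cup\JJ$) as \emph{additional block erasures} for the outer decoder. Concretely, the outer decoder is asked to correct the block errors on $\JJ$ while treating $\KK$ together with the (at most $\vartheta+\varrho$) columns meeting $\LL \cup \RR$ as erased columns. This is where I would need the hypotheses to combine correctly, and this is the step I expect to be the main obstacle: I must verify that enlarging the erasure set in this way still keeps us within the regime $2\tau + (\text{erasure count}) \le d - 2$ of Theorem~\ref{thm:generalizedmetzner}. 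Naively the erasure count could be as large as $\rho + \vartheta + \varrho$, which need \emph{not} be $\le d-2$. So the crude "symbols-as-block-erasures" reduction is too lossy, and the real argument must exploit property~(a) to localize the symbol errors \emph{without} sacrificing block-correction budget.

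The correct approach, then, is a two-pass decoding. In the \textbf{first pass} I recover the block layer using only the block budget: I apply Theorem~\ref{thm:generalizedmetzner} to the outer interleaving of $\code$, with $K = \KK$ (so $\Rho = \rho$) and with $E$ the outer error array restricted to the block-affected columns. The symbol errors on columns outside $\KK\cup\JJ$ are a nuisance, but I handle them by \emph{also} flagging those columns; the essential point is that the symbol-affected columns can be identified on a per-column basis because, within any single column $j \notin \KK\cup\JJ$, property~(a) guarantees that the syndrome of that column under $H_\In$ (more precisely the partial inner syndrome) detects up to $\delta-1$ symbol alterations and, since $2\vartheta + \varrho \le \delta-1$, the inner code can error-erasure decode each such column \emph{individually}. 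Thus I would: (1) for each column $j \notin \KK$, use property~(a)---the minimum distance $\ge\delta$ of the inner code---to correct the $\le \vartheta$ symbol errors and $\le \varrho$ symbol erasures \emph{inside that column}, recovering the true $Z_j = H_j\Gamma_j$ for all $j \notin \KK\cup\JJ$; this removes $\Epsilon_{\mathrm{sym}}$ entirely. (2) After this inner cleanup, the only remaining alterations are the genuine block errors on $\JJ$ and block erasures on $\KK$, so the transformed array $Z$ has outer error array supported on $\KK\cup\JJ$ with $|\JJ|=\tau$, $|\KK|=\rho$. Now Theorem~\ref{thm:generalizedmetzner}, with $\Rho = \rho$ and $2\tau + \rho \le d-2$, guarantees a \emph{unique} outer codeword and hence correct recovery of $\Gamma$. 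I would present this as the body of the proof, invoking Theorem~\ref{thm:generalizedmetzner} and property~(a) as black boxes.

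The main obstacle, to restate it, is justifying step~(1): that the symbol-layer correction can be done column-locally using property~(a) alone, and that it does not interfere with the block layer. The subtlety is that on the block-error columns $\JJ$ (whose content we do \emph{not} yet know) we must \emph{not} attempt inner decoding---but by the disjointness in (T3)--(T4), $\LL$ and $\RR$ avoid $\KK\cup\JJ$, so every column carrying a symbol error is block-error-free and the inner decoder sees a genuine (possibly erased-entry) codeword perturbed by at most $\vartheta$ errors and $\varrho$ erasures, which $2\vartheta+\varrho\le\delta-1$ lets it correct uniquely. I would therefore organize the final write-up as: state the two-layer decoder, verify correctness of the inner pass on columns outside $\KK\cup\JJ$ via property~(a) and $2\vartheta+\varrho\le\delta-1$, observe that this reduces the problem to pure block errors/erasures, and close by citing Theorem~\ref{thm:generalizedmetzner} under $2\tau+\rho\le d-2$.
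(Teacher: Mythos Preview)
Your proposed decoder breaks at the inner pass. Property~(b) in Definition~\ref{def:main:code:construction} says each $H_j$ is an \emph{invertible} $m\times m$ matrix, so the inner ``encoder'' $Z_j \mapsto H_j^{-1}Z_j$ is a bijection of $F^m$: the inner code has rate one and carries no redundancy whatsoever. There is therefore no way to ``use property~(a) to correct the $\le\vartheta$ symbol errors and $\le\varrho$ symbol erasures inside that column'' individually---within a single column there is nothing to decode against. Property~(a) is a \emph{global} linear-independence statement about the $m\times(mn)$ matrix $H_\In$; it does not furnish any per-column minimum distance (each block $H_j$, being square nonsingular, defines only the trivial code $\{\bldzero\}$). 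A secondary but related problem is that the decoder does not know $\JJ$, so your step~(1) would also be applied to the block-error columns, where even a genuine inner decoder would miscorrect; your dismissal of this via the disjointness of $\LL$ and $\JJ$ does not address it.

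The paper's proof works entirely differently, as a uniqueness argument rather than a constructive decoder. One assumes two admissible error arrays $\Epsilon,\hEpsilon$ yield the same syndrome, forms the transformed difference $Z$ with $Z_j=H_j(\Epsilon_j-\hEpsilon_j)$, and shows $Z=0$ via Lemma~\ref{lem:generalizedmetzner}. The role of property~(a) is the following: the columns of $Z$ indexed by the symbol-only set $\QQ=\Support(\Epsilon-\hEpsilon)\setminus(\JJ\cup\hJJ)$ are nontrivial combinations of at most $2\vartheta_{\max}+\varrho\le\delta-1$ columns of $H_\In$, drawn from disjoint blocks, and hence are linearly independent. This gives $\rank(Z)\ge|\QQ|$, which combined with $|\Support(Z)|\le 2\tau_{\max}+|\QQ|$ yields $|\Support(Z)|-\rank(Z)\le d-2$, and Lemma~\ref{lem:generalizedmetzner} finishes. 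The symbol layer is thus not decoded separately at all; it is absorbed into the rank side of the outer Singleton-type inequality. That mechanism---property~(a) forcing linear independence of the symbol-affected transformed columns---is the key idea you are missing.
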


\begin{proof}
Using puncturing as in the proof of
Theorem~\ref{thm:generalizedmetzner}, it suffices to prove
the theorem for the case where $\KK$ is empty, \ie, there
are no erasure events of type (T2).

The proof is by contradiction. So, assume that $\Epsilon$ and
$\hEpsilon$ are two distinct $m \times n$ matrices that
correspond to error events of types~(T1), (T3), and~(T4),
with the respective sets $\JJ$, $\hJJ$, $\LL$,
and $\hLL$ satisfying
\begin{alignat*}{2}
    &2 \tau_{\max} 
      &&\le
         d-2 \; , \\
    &2 \vartheta_{\max} + \varrho
      &&\le
         \delta-1 \; ,
  \end{alignat*}
  where
  \begin{align*}
    \tau_{\max}
      &= \max \{ |\JJ|, |\hJJ| \} \; , \\
    \vartheta_{\max}
      &= \max \{ |\LL|, |\hLL| \}  \; ,
\end{align*}
and $\varrho = |\RR|$ (the symbol erasure set $\RR$ is
the same for both $\Epsilon$ and $\hEpsilon$).
We will have reached a contradiction once we have shown that
$\Epsilon - \hEpsilon$ is a codeword of $\Code$ if only if
$\Epsilon = \hEpsilon$.
  
Let $Z$ be the $m \times n$ matrix
\begin{align*}
    Z
      &= \left(
           \begin{array}{c|c|c|c}
             \!\! 
             H_0 (\Epsilon_0 {-} \hEpsilon_0)
             \! & \!
             H_1 (\Epsilon_1 {-} \hEpsilon_1)
             \! & \! 
             \ldots
             \! & \!
             H_{n-1} (\Epsilon_{n-1} {-} \hEpsilon_{n-1})
             \!\!
           \end{array}
         \right)
         \; .
\end{align*}
Observe that since the matrices $H_0, H_1, \ldots, H_{n-1}$ are all
invertible over $F$, we get that a column in $Z$ is zero if and only
if it is zero in $\Epsilon - \hEpsilon$; in particular, $Z = 0$ if
and only if $\Epsilon = \hEpsilon$.  Write
\begin{align*}
    \QQ = \Support(\Epsilon - \hEpsilon)
    \setminus (\JJ \cup \hJJ) \; ,
\end{align*}
namely, the set $\QQ$ indexes the columns of $\Epsilon - \hEpsilon$ that
contain errors of type~(T3) and (possibly part of) the erasures of
type~(T4).  For each $j \in \QQ$, denote by $w_j$ the number of nonzero
entries in $\Epsilon_j - \hEpsilon_j$ (that is,
$w_j = \bigl| \Support\bigl( (\Epsilon_j - \hEpsilon_j)^\transpose \bigr) 
\bigr|$).
The total number of nonzero entries in
$(\Epsilon - \hEpsilon)_\QQ$ satisfies:
\begin{align*}
    \sum_{j \in \QQ} w_j
      &\le
         |\LL \cup \hLL \cup \RR|
       \le
         2 \vartheta_{\max} + \varrho
       \le \delta-1 \; .
\end{align*}
Consider the respective columns in $Z$:
\begin{align*}
    Z_j
      &= H_j (\Epsilon_j - \hEpsilon_j) \; ,
           \quad j \in \QQ \; .
\end{align*}
Each $Z_j$ is a nontrivial linear combination of $w_j$ columns of
$H_\In$, and, obviously, for distinct indexes $j$
these combinations involve disjoint sets of columns of $H_\In$.
Recalling that every subset of
$\sum_{j \in \QQ} w_j \; (\le \delta - 1)$ columns in $H_\In$
is linearly independent, we thus get that the set of columns of
$(Z)_\QQ$ is linearly independent, that is,
\begin{align*}
    \rank(Z)
      &\ge
         \rank\bigl( (Z)_\QQ \bigr)
       = |\QQ| \; .
\end{align*}
On the other hand,
\begin{align*}
    |\Support(Z)|
      &\le
         |\JJ| + |\hJJ| + |\QQ|
       \le
         2 \tau_{\max} + |\QQ|
       \le d-2 + |\QQ|
\end{align*}
and, so,
\begin{align*}
    |\Support(Z)| - \rank(Z)
      &\le d - 2  \; .
\end{align*}
Hence, we conclude from Lemma~\ref{lem:generalizedmetzner} that each
row in $Z$ belongs to $\code$ only if $Z = 0$.
Equivalently, $\Epsilon - \hEpsilon$ belongs to $\Code$
if only if $\Epsilon = \hEpsilon$, as promised.
\end{proof}

\begin{remark}
  We draw the attention of the reader to the condition on $\tau$ and $\rho$ in
  Theorem~\ref{thm:combinederrors}, namely, that the expression $2 \tau +
  \rho$ be at most $d-2$, rather than the (more common) requirement that it be
  at most $d-1$.  It is this slightly stronger condition that, implicitly,
  provides the required redundancy for correcting the (additional) symbol
  errors and erasures.\eqed
\end{remark}

\begin{remark}
  Theorem~\ref{thm:combinederrors} indirectly implies a dependence of the
  correction capability of errors of type~(T3)--(T4) on the parameter $m$,
  which, in turn, is part of the specification of the errors of
  type~(T3)--(T4).  Specifically, the largest possible value for $\delta$ can
  be the minimum distance of any linear code of length $m n$ and redundancy
  $m$ over $F$.  Of course, one may re-arrange the array by grouping together
  non-overlapping sets of $s$ columns, for some integer $s > 0$, to form an
  $m' \times n'$ array where $m' = s m$ and $n' = n/s$ (assuming the latter
  ratio is an integer).  The block errors and the symbol errors will remain so
  also in this modified setting, except that the block errors will be more
  structured than just being phased with respect to the (new) parameter $m'$.
  If this additional structure is not taken into account, the code $\Code$ is
  bound to be sub-optimal.\eqed
\end{remark}

\begin{remark}
  In contrast to the previous remark, if $m$ is (much) larger than the
  redundancy needed from a linear code of length $m n$ and minimum distance
  $\delta$ as in Theorem~\ref{thm:combinederrors}, we may partition each
  column in the original array into $s$ new columns, thereby forming an $m'
  \times n'$ array, where $m' = m/s$ and $n' = s n$, such that $m'$ is (just)
  the redundancy required from a linear code of length $m n$ and minimum
  distance $\delta$.  Of course, this will increase the number of block errors
  by a factor of $s$, yet as long as $n'$ is sufficiently small to allow us to
  use a maximum-distance separable (MDS) code for $\code$ (as will be the case
  in the examples in Section~\ref{sec:examples}), we will obtain a gain in the
  redundancy: it will reduce from $(2 \tau + \rho) m + m$ to $(2 \tau + \rho)
  m + m'$. Thus, the code $\Code$ is suitable for (the rather practical)
  scenarios where the number of symbol errors is relatively small compared to
  the block-error length $m$.\eqed
\end{remark}

\begin{remark}
  One may speculate whether Theorem~\ref{thm:combinederrors} holds for the
  following more general definition of $\Code$: instead of requiring that each
  row of $Z$ in~(\ref{eq:HjGammaj}) be a codeword of $\code$, require that $Z$
  belong to a linear $[n,k,d]$ code over $\GF(q^m)$, where each column of $Z$
  is regarded as an element of the latter field with respect to some basis of
  that field over $F$.  It turns out that Theorem~\ref{thm:combinederrors}
  does \emph{not} hold for this more general setting, as shown by the
  following counterexample.

  Let $d = 2$ (\ie, $\tau = \rho = 0$). Assume that $n \le q^m - 1$ and let
  $C_0, C_1, C_2, \ldots, C_{n-1}$ be any $n$ distinct powers of an $m \times
  m$ companion matrix of some irreducible polynomial of degree~$m$ over
  $F$~\cite[p.~106]{MS}, where $C_0 = I$ (the $m \times m$ identity
  matrix). Then
  \begin{align*}
      \biggl\{
        Z = \left(
              \begin{array}{c|c|c|c}
                Z_0 & Z_1 & \ldots & Z_{n-1}
              \end{array}
            \right)
            \; : \;
            \sum_{j \in \Interval{n}} C_j Z_j = \bldzero
      \biggr\}
  \end{align*}
  is a linear $[n,n{-}1,2]$ code over $\GF(q^m)$ (with the columns of each $Z$
  being the codeword coordinates).  The respective code $\Code$ would then be
  written as
  \begin{align*}
      \biggl\{
        \Gamma = \left(
                   \begin{array}{c|c|c|c}
                     \Gamma_0 & \Gamma_1 & \ldots & \Gamma_{n-1}
                   \end{array}
                 \right)
                 \; : \;
                 \sum_{j \in \Interval{n}} C_j H_j \Gamma_j = \bldzero
      \biggr\} \; .
  \end{align*}
  If $\delta \ge 3$ then the $2m$ columns of $H_0$ and $H_1$ are all nonzero
  and distinct. Therefore, we can select $C_1 \ne I$ so that the first column
  (say) in $C_1 H_1$ equals the first column (say) in $H_0 = C_0 H_0$. Yet
  this means that there exists a nonzero $\Gamma \in \Code$ that contains only
  two nonzero entries. Therefore, $\Code$ cannot have a decoder that corrects
  all one-symbol error patterns.\eqed
\end{remark}

\subsection{Examples}
\label{sec:examples}

In this subsection, we consider various special choices for $\code$ and
$H_\In$ and demonstrate the properties of the resulting code
    $\Code = (\code, H_\In)$. Specifically,
    in Example~\ref{ex:optimal} below,
    we discuss complexity advantages,
    in an error-detection setting, that the construction
    $\Code$ can offer
    (with a suitable choice for $\code$ and $H_\In$)
    over MDS codes with the same length and size.
    In Examples~\ref{ex:GRS} and~\ref{ex:MDS},
    we show cases where $\Code$ is MDS (in fact, GRS),
    and, in contrast, we exhibit in Example~\ref{ex:nonMDS}
    a choice of parameters for which no MDS code can have the same
    length, size, and symbol--block correction capability as $\Code$.
    In Examples~\ref{ex:BCH}--\ref{ex:generalizedconcatenated},
    we demonstrate the advantages of the code $\Code$ over other
    existing alternatives for handling errors of types~(T1)--(T4),
    such as 
    concatenated codes and generalized concatenated (GC) codes.

\begin{example}
\label{ex:optimal}
We consider first the special case where $m n \le q + 1$.
Here, we can take $\code$ to be an MDS code over $F$ and $H_\In$
to be a parity-check matrix of an MDS code over $F$.
Under such circumstances we have $d = n - k + 1$ and $\delta = m + 1$,
which means that it suffices that the sizes $\tau$, $\rho$,
$\vartheta$, and $\varrho$ satisfy
\begin{align*}
    2 \tau + \rho
      &\le
         n - k - 1 \; , \\
    2 \vartheta + \varrho
      &\le
         m \; .
\end{align*}
The redundancy of $\Code$, being $m(n - k)$,
is then the smallest possible for this correction capability:
since the total number of symbol errors can be as large as
$m \tau + \vartheta$ and
the total number of symbol erasures is $m \rho + \varrho$, by
the Reiger bound~(\cite{LC}, \cite{PW}) we need a redundancy of
at least $m (2 \tau + \rho) + 2 \vartheta + \varrho$ symbols over $F$ in
order to be able to correct all error types (T1)--(T4).
Admittedly, the same performance of correction capability versus
redundancy can be achieved also by a single linear $[m n,m k]$ MDS code 
$\varcode$ over $F$ (which exists under
the assumption that $m n \le q + 1$).
However, as pointed out earlier, the use of such
a code $\varcode$ does not take into account the alignment of
error types (T1) and~(T2) across the rows of the received $m \times n$
array.  It is this alignment that allows $\Code$
to achieve the same correction capability using
a code $\code$, which is $m$ times shorter than $\varcode$.
While we still need for $\Code$ the parity-check matrix $H_\In$ of
an MDS code of length $m n$, the redundancy of the latter code needs to 
be only $m$, rather than $m(n - k)$.
  
To demonstrate the savings that $\Code$ may offer compared
to $\varcode$, consider the simple problem of verifying whether
a given $m \times n$ array $\Gamma$ belongs to the code
(namely, \emph{detecting} whether errors have occurred).
When using $\varcode$, we will regard $\Gamma$ as a vector of
length $m n$ over $F$ and the checking will be carried out through
multiplication by an $(m (n{-}k)) \times (m n)$ (systematic)
parity-check matrix of $\varcode$, thereby requiring up to
$2 m^2 k(n-k)$ operations (namely, additions and multiplications)
in $F$.  In contrast, when using $\Code$, we will first compute
the array $Z$ as in~(\ref{eq:HjGammaj}) while requiring less than
$2 m^2 (n-1)$ operations in $F$ (one of the matrices $H_j$ can be
assumed to be the identity matrix); then we will compute
the syndrome of each row of $Z$, for which we will need up to
$2 m k (n-k)$ operations in $F$.\eqed
\end{example}

\begin{example}
\label{ex:GRS}

Suppose that $m n \le q-1$ and select $\code$ to be
an $[n,k,d{=}n{-}k{+}1]$ GRS code $\code_\GRS$ over $F$ as in
Section~\ref{sec:GRS}.  For every $j \in \Interval{n}$, let
\begin{align*}
    H_j
      &= \left( \, 
           \beta_{\kappa,j}^h \,
         \right)_{h \in \Interval{m}, \kappa \in \Interval{m}} \; ,
\end{align*}
such that the elements $\beta_{\kappa,j}$ are distinct and nonzero
in $F$ for all $\kappa \in \Interval{m}$ and $j \in \Interval{n}$;
the respective matrix $H_\In = (H_j)_{j \in \Interval{n}}$ is then
a parity-check matrix of an $[m n, m(n{-}1), m{+}1]$ GRS code over $F$. 
Given any $m \times n$ matrix
$\Gamma = (\Gamma_{\kappa,j})_{\kappa\in\Interval{m},j\in \Interval{n}}$
over $F$, the entries of $H_j \Gamma_j$ 
are given by
\begin{align*}
    (H_j \Gamma_j)_h
      &= \sum_{\kappa \in \Interval{m}}
           \Gamma_{\kappa,j} \beta_{\kappa,j}^h \; ,
    \quad
    h \in \Interval{m} \; .
\end{align*}
  (Recall the definition of $\Gamma_j$ from 
  Definition~\ref{def:main:code:construction}.)
Hence, $\Gamma$ is in $\Code = (\code,H_\In)$ if and only if
\begin{align*}
    \sum_{\kappa \in \Interval{m}}
      \sum_{j \in \Interval{n}}
        \Gamma_{\kappa,j}
        \alpha_j^i
        \beta_{\kappa,j}^h
      &= 0 \; ,
           \quad
             h \in \Interval{m} \; ,
           \quad
             i \in \Interval{d-1} \; .
\end{align*}
(Note that if $\beta_{\kappa,j}$ depended only on $\kappa$, then $\Code$
could be seen as a two-dimensional shortening of a two-dimensional
cyclic code; see, for example~\cite{Sakata3}.)\eqed
\end{example}

\begin{example}
\label{ex:MDS}

We show that sometimes the construction $\Code$ in
Example~\ref{ex:GRS} is an MDS code.  Assume therein
that $m$ divides $q-1$ and that for every $j \in \Interval{n}$,
the multiplicative order of $\alpha_j$ divides $(q-1)/m$
(thus, each $\alpha_j$ has $m$ distinct $m$th roots in $F$).
For every $j \in \Interval{n}$, select
$\beta_{0,j}, \beta_{1,j}, \ldots, \beta_{m-1,j}$
to be the distinct roots of $\alpha_j$ in $F$.
Then $\Gamma \in \Code$ if and only if
\begin{align*}
    \sum_{\kappa \in \Interval{m}}
      \sum_{j \in \Interval{n}}
        \Gamma_{\kappa,j} \beta_{\kappa,j}^{h + m i}
      &= 0 \; ,
           \quad
             h \in \Interval{m} \; ,
           \quad
             i \in \Interval{d-1} \; .
  \end{align*}
The latter condition, in turn, is equivalent to $\Gamma$ being
a codeword of a GRS code of length $m n$ and redundancy
$(d - 1)m$ over $F$.\eqed
\end{example}

In contrast, the following example shows that sometimes
$\Code = (\code,H_\In)$ is not an MDS code, even when $\code$ is
MDS and $H_\In$ is a parity-check matrix of an MDS code.

\begin{example}
\label{ex:nonMDS}

Suppose that $q$ is a power of $4$ and take $m = 3$ and $n = (q + 2)/3$.
Select $H_\In$ to be a parity-check matrix of a $[q{+}2,q{-}1,4]$
triply-extended GRS code over $F$~\cite[p.~326]{MS} and $\code$ to be
any linear $[n,k]$ code over $F$.
Thus, $\Code$ is a linear $[q{+}2,3k]$ code over $F$.  It follows from
the already-proved range of the MDS conjecture that $\Code$, being
longer than $q + 1$, cannot be MDS when, say,
$2 \le k \le 1 + \frac{1}{6} \sqrt{q}$~\cite{ST}.\eqed
\end{example}

In fact, Example~\ref{ex:nonMDS} shows that there are choices of
$F$, $n$, $m$, $\tau$, and $\vartheta$ for which
the construction $\Code = (\code,H_\In)$ can be realized to correct any 
$\tau$ block errors and any $\vartheta$ symbol errors,
while, on the other hand, there are no codes over $F$ of the same
length and size as $\Code$ that can correct any
$2 \tau m + \vartheta$ symbol errors
(Example~\ref{ex:BCH} below presents
a larger range of parameters where this may happen).

In Examples~\ref{ex:BCH}--\ref{ex:generalizedconcatenated},
we make a running assumption that $n \le q$, in which case $\code$
can be taken as an MDS code, such as a (possibly extended) GRS code.
For the sake of simplicity, we will consider in these examples
only the block--symbol error-only case, \ie, no erasures are present.

\begin{example}
\label{ex:BCH}

Given positive $\tau$, $\vartheta$, $n$, and $F = \GF(q)$
(such that $2 \tau + 2 \le n \le q$), we take $H_\In$ to be
a parity-check matrix of a (possibly extended) shortened BCH code of
length $m n$ over $F$, where $m$ is determined by $\vartheta$, $n$,
and $q$ to satisfy the equality
\begin{align*}
    m
      &= 1
         + 
         \Bigl\lceil
           \frac{q{-}1}{q} \cdot (2 \vartheta - 1)
         \Bigr\rceil
         \cdot
         \Bigl\lceil
           \log_q (m n)
         \Bigr\rceil
\end{align*}
(so $m n$ may be larger than $q$; see~\cite[p.~260]{Roth}).  The code
$\code$ is taken as a (possibly extended) $[n,k,d]$ GRS code over $F$
where $d = 2\tau+2$. The overall redundancy of
$\Code = (\code,H_\In)$ is then
\begin{align}
    \label{eq:BCH}
    (2 \tau + 1) m 
      &= 2 \tau m
         + 
         1
         +
         \Bigl\lceil
           \frac{q{-}1}{q} \cdot (2 \vartheta - 1)
         \Bigr\rceil
         \cdot
         \Bigl\lceil
           \log_q (m n)
         \Bigr\rceil \; .
\end{align}
The first term, $2 \tau m$, on the right-hand side of~(\ref{eq:BCH}) is
the smallest redundancy possible if one is to correct any $\tau$
block errors of length $m$. The remaining term therein is
the redundancy (or an upper bound thereof)
of a BCH code that corrects any $\vartheta$ symbol errors
over $F$.  In comparison, a shortened BCH code of length $m n$ over $F$ 
that corrects any $\tau m + \vartheta$ symbol errors may have
redundancy as large as
\begin{align*}
    1
    +
    \Bigl\lceil
      \frac{q{-}1}{q}
      \cdot
      \bigl(
        2 (\tau m + \vartheta) - 1
      \bigr)
    \Bigr\rceil
    \cdot
    \Bigl\lceil
      \log_q (m n)
    \Bigr\rceil \; .
\end{align*}
It can be verified that the latter expression is larger
than~(\ref{eq:BCH}) when $m n > q \ge 4$.\eqed
\end{example}

\begin{example}
\label{ex:concatenated}

We compare the performance of $\Code$ with that of a concatenated code
$\varcode$ constructed from a linear $[m,\Kin,\Din]$ inner code over $F$
and a linear $[n,\Kout,\Dout]$ outer code over $\GF(q^\Kin)$
(where $n \le q$).
By the Singleton bound, we can bound the redundancy of $\varcode$ from
below by
\begin{align}
    m n - \Kin \Kout
      &\ge
         (\Dout \! - \! 1)m
         +
         (\Din \! - \! 1)n
         -
         (\Dout \! - \! 1)(\Din \! - \! 1) 
    \nonumber \\
    \label{eq:redcont1}
      &= (\Dout \! - \! 1)(m \! + \! 1)
         -
         n
         +
         \Din (n \! - \! \Dout \! + \! 1) \; .
\end{align}
As we mentioned already in Section~\ref{sec:related:work},
any error pattern of up to $\tau$ block errors and up to $\vartheta$
symbol errors can be correctly decoded, whenever
\begin{align}
    \label{eq:ZZ}
    2 \vartheta + 1 
      &\le
         \Din (\Dout - 2 \tau) \; .
\end{align}
Since in our setting the values $\tau$ and $\vartheta$ are prescribed,
we can minimize~(\ref{eq:redcont1}) over $\Din$ and $\Dout$,
subject to the inequality~(\ref{eq:ZZ}).
Specifically, we define $\Delta = \Dout - 2 \tau$
and, from~(\ref{eq:ZZ}), we can express $\Din$ in terms of $\Delta$
as $\Din = \left\lceil (2 \vartheta + 1)/\Delta \right\rceil$, in which
case~(\ref{eq:redcont1}) becomes
\begin{align}
\label{eq:redcont2}
    &\hskip-0.6cm
    \Delta (m+1)
    +
    \left\lceil
      \frac{2 \vartheta + 1}{\Delta}
    \right\rceil
    (n-2\tau+1-\Delta)
      \nonumber \\
    &\hskip-0.65cm
     + (2\tau-1)(m+1)
     - n \\
    &\label{eq:redcont3}
     \ge
       \Delta (m+1)
       +
       \frac{(2 \vartheta + 1)(n-2\tau+1)}{\Delta}
         \nonumber \\
     &\quad\,
      -
      (2\vartheta + 1)
      -
      (n-2\tau+1)
      +
      (2\tau-1)m \; . \hskip-0.25cm
\end{align}
The minimum of~(\ref{eq:redcont3}) over (a real) $\Delta$ is attained at
\begin{align*}
    \Delta_{\min} = \sqrt{\frac{(2 \vartheta + 1)(n-2\tau+1)}{m+1}} \; .
\end{align*}
Yet we need to take into account that both $\Din$ and $\Delta$ are
positive integers.
  
\emph{Case 1: $n-2 \tau < \frac{m+1}{2\vartheta+1}$.}  In this range (of
very small $n$) we have $\Delta_{\min} \le 1$ and, so, we take
$\Delta = 1$; namely, for this range, it is best to let
the inner code handle (exclusively) the symbol errors,
and then the outer code is left to correct the block errors.
For this range, the expression~(\ref{eq:redcont2}) becomes
\begin{align*}
    2 \tau m + 2 \vartheta (n - 2\tau) \; ,
\end{align*}
which is never larger than the smallest possible redundancy,
$(2 \tau + 1)m$, of $\Code$.
Notice, however, that when $H_\In$ in $\Code$ can be taken as
a parity-check matrix of an MDS code,
then $m = 2 \vartheta$ and therefore this range is empty.
  
\emph{Case 2: $n-2\tau \ge (m+1) \cdot (2\vartheta+1)$.}
In this range we have $\Delta_{\min} > 2\vartheta + 1$ and, so,
the expression~$\lceil \cdot \rceil$ (for $\Din$)
in~(\ref{eq:redcont2}) equals~$1$; namely, for this range,
it is best to regard the symbol errors as block errors.
Therefore, we take $\Delta = 2\vartheta + 1$ and the
expression~(\ref{eq:redcont2}) becomes
\begin{align*}
    2 (\tau + \vartheta) m \; ,
\end{align*}
which is larger than the redundancy of $\Code$ whenever $\vartheta > 0$
(assuming that $\code$ in $\Code$ is an MDS code).
  
\emph{Case 3:
$\frac{m+1}{2\vartheta+1} \le n - 2\tau < (m+1) \cdot (2 \vartheta+1)$.}
In this range, we plug the expression for $\Delta_{\min}$
into~(\ref{eq:redcont3}), resulting in the following lower bound
expression on the redundancy of $\varcode$:
\begin{align*}
    & 2 \sqrt{(2 \vartheta + 1)(n-2\tau+1)(m+1)}
      - (2\vartheta + 1) \\
    & - (n-2\tau+1) + (2\tau-1)m \; .
\end{align*}
This lower bound is greater than the redundancy of $\Code$ whenever
\begin{align}
    4 (2 \vartheta + 1)&(n-2\tau+1)(m+1) \nonumber \\
    \label{eq:redcont4}
      &> \Bigl(
           2m 
           + 
           (2\vartheta + 1)
           +
           (n-2\tau+1)
         \Bigr)^2 \; .
\end{align}
Since the left-hand side of~(\ref{eq:redcont4}) is a cubic expression
while the right-hand side is only quadratic,
the inequality~(\ref{eq:redcont4}) is expected to hold for
a range of parameters of interest, e.g., when
$m$, $\tau$, and $\vartheta$ scale linearly with $n$.\eqed
\end{example}

\begin{example}
\label{ex:generalizedconcatenated}

In many cases, the redundancy of $\Code$ is smaller even than that
of the generalized concatenated (GC) code construction defined
through conditions~(G1)--(G2) in Section~\ref{sec:related:work}.
Referring to the notation therein, we first note that
if $\Dout_v > 2\tau  + 1$, then the contribution of condition~(G2)
alone to the redundancy is already at least
$(\Dout_v{-}1)m \ge (2\tau+1)m$.
Hence, we assume that $\Dout_v = 2\tau + 1$, 
in which case, from~(\ref{eq:generalizedconcatenated}),
we get that $\Rin_v \ge \Din_v-1 \ge 2\vartheta$.
Thus, condition~(G2) induces a redundancy of at least $2\tau m$,
and condition~(G1) adds a redundancy of at least
\begin{align}
\label{eq:generalizedconcatenatedredundancy}
\sum_{i=1}^v (\Rin_i - \Rin_{i-1}) (\Dout_{i-1} - 1 - 2\tau) \; .
\end{align}
In Appendix~\ref{sec:generalizedconcatenated},
we show that the expression~(\ref{eq:generalizedconcatenatedredundancy})
is bounded from below by
\begin{align}
\label{eq:generalizedconcatenatedlowerbound}
(2\vartheta+1) \ln \vartheta + 2 \gamma \cdot \vartheta + O(1) \; ,
\end{align}
where $\gamma$ is Euler's constant
(approximately $0.5772$)~\cite[p.~264]{GKP}.
Taking $\code$ in $\Code$ as an MDS code,
the redundancy of $\Code$ is then smaller
than that of GC codes
(with the same correction capabilities)
whenever~(\ref{eq:generalizedconcatenatedlowerbound})
exceeds $m$.\eqed
\end{example}

\section{Decoding Algorithms for the \\ 
               Main Code Construction}
\label{sec:decoding:algorithms}

We now discuss a variety of decoders for
the code $\Code = (\code, H_\In)$ that was specified in
Definition~\ref{def:main:code:construction},
for the case where the constituent code $\code$ is a GRS code.
Section~\ref{sec:polynomialtime} presents
a polynomial-time decoding algorithm for error and erasures of types
(T1)--(T4) as far as they are correctable as guaranteed by
Theorem~\ref{thm:combinederrors}, and provided that
the code parameters $n$, $d$, $\delta$, $\rho$, and $\varrho$
satisfy a certain inequality (see Theorem~\ref{thm:GS} below).
Then, we consider the construction $\Code$ as in Example~\ref{ex:GRS}
(where $\code$ is a GRS code and $H_\In$ is a parity-check matrix of
a GRS code) and present some more specialized decoders for
this construction. Namely,
Section~\ref{sec:decoding:t1:t2:t4} discusses a decoder that handles
errors and erasures of types (T1), (T2), (T4), but not of type (T3),
and Section~\ref{sec:specialconfigurations} introduces a decoder that
handles errors and erasures of types~(T1), (T2), (T4),
and some combinations of errors of type~(T3),
  including the case where there are at most three errors
  of type~(T3).%
  \footnote{%
      A small number of errors of type~(T3) is
      a realistic assumption for memory storage
      applications that use \emph{scrubbing}, i.e., memory storage
      applications where a background task periodically inspects
      the memory for errors and corrects them if necessary.
      Such a background task helps avoid the accumulation of
      errors between the time that a program writes and reads
      a certain memory location.
  }
As of yet, we do not have an efficient decoder that corrects all error
patterns that satisfy the conditions of
Theorem~\ref{thm:combinederrors}
(even for the construction of Example~\ref{ex:GRS},
excepting certain special cases such as Example~\ref{ex:MDS}).

Assume that $m$, $\tau$, $\rho$, $\vartheta$, and $\varrho$ scale
linearly with $n$.
If $\Code$ is replaced by a GRS code (if such a code exists) then
the decoding complexity scales linearly with $(n^2)^2 = n^4$. 
  One of the main purposes of defining the code
  $\Code = (\code, H_\In)$ is the potential existence of
  a decoding algorithm whose complexity does not scale higher than
  $n^3$, as is the case for the special choices of
  $\Code = (\code, H_\In)$ and decoders in
  Sections~\ref{sec:decoding:t1:t2:t4}
  and~\ref{sec:specialconfigurations}.

\subsection{Polynomial-Time Decoding Algorithm}
\label{sec:polynomialtime}

Example~\ref{ex:MDS} demonstrates one particular instance of
the construction $\Code = (\code,H_\In)$ for which
the decoder guaranteed by Theorem~\ref{thm:combinederrors} has
an efficient implementation (simply because in
this case the code $\Code$ is a GRS code). In this section, we exhibit
a much wider range of instances for which decoding can be carried
out in polynomial-time complexity.

Specifically, we consider the case where the code $\code$ is a GRS code
over $F$ (and, so, $n \le q-1$), and $H_\In$ is
an \emph{arbitrary} $m \times (m n)$ matrix over $F$ that satisfies
the two properties (a)--(b) in
Definition~\ref{def:main:code:construction}.
The columns of $m \times n$ arrays will be regarded as elements of
the extension field $\GF(q^m)$ (according to some basis of
$\GF(q^m)$ over $F$).
When doing so, the matrix $Z$ in~(\ref{eq:HjGammaj}) can be seen as
a codeword of a GRS code $\code'$ over $\GF(q^m)$, where $\code'$
has the same code locators $(\alpha_j)_{j \in \Interval{n}}$ as
$\code$
(this observation was used, for example, in~\cite{SSB2},
and more recently in~\cite{GX}).

Let $\Gamma \in \Code$ be the transmitted $m \times n$ array and let
$\Upsilon$ be the received $m \times n$ array, possibly corrupted by
$\tau$ errors of type~(T1) and $\vartheta$ errors of type~(T3),
where $\tau \le (d/2) - 1$ and $\vartheta \le (\delta-1)/2$.
We first compute an $m \times n$ array
\begin{align*}
  Y
    &= \left(
         \begin{array}{c|c|c|c}
       H_0 \Upsilon_0 & H_1 \Upsilon_1 & \ldots & H_{n-1} \Upsilon_{n-1}
         \end{array}
       \right)
       \; ,
\end{align*}
where $Y$ contains $\tau + \vartheta \le (d+\delta-3)/2$ erroneous
columns. Regarding now $Y$ as a corrupted version of
a codeword of $\code'$, we can apply a \emph{list decoder} for $\code'$ 
to $Y$.  Such a decoder returns a list of up to a prescribed number
$\List$ of codewords of $\code'$, and the returned list is guaranteed
to contain the correct codeword, provided that the number of erroneous
columns in $Y$ does not exceed the \emph{decoding radius} of $\code'$.
In our decoding, we will use the polynomial-time list decoder due
to Guruswami and Sudan~\cite{GS}
(for variations of the algorithm that reduce its complexity,
see~\cite{KV} and~\cite{Wu}).
For any GRS code of length $n$ and
minimum distance $d$ (over any field) and for any prescribed list size
$\List$, their decoder will return the correct codeword as long as
the number of errors does not exceed $\Ceil{n \GS_\List(d/n)} - 1$,
where $\GS_\List(d/n)$ is the maximum over
$s \in \{ 1, 2, \ldots, \List \}$ of the following expression:
\begin{align*}
  \GS_{\List,s}(d/n)
    &= 1
       -
       \frac{s+1}{2(\List+1)}
       -
       \frac{\List}{2s}
       \left(
         1 - \frac{d}{n}
       \right)
\end{align*}
(see~\cite[Chapter~9.5]{Roth}).  Thus, if $\List$ is such that
\begin{align}
  \label{eq:decodingradius}
  n \GS_\List(d/n)
    &\ge
       (d+\delta-1)/2 \; ,
\end{align}
then the returned list will contain the correct codeword
\begin{align*}
  Z
    &= \left(
         \begin{array}{c|c|c|c}
           H_0 \Gamma_0 & H_1 \Gamma_1 & \ldots & H_{n-1} \Gamma_{n-1}
         \end{array}
       \right)
\end{align*}
of $\code'$. For each array $Z'$ in the list we can compute
the respective array in $\Code$,
\begin{align*}
  \Gamma'
    &= \left(
         \begin{array}{c|c|c|c}
          H_0^{-1} Z'_0 & H_1^{-1} Z'_1 & \ldots & H_{n-1}^{-1} Z'_{n-1}
         \end{array}
       \right)
       \; ,
\end{align*}
and it follows from the proof Theorem~\ref{thm:combinederrors} that only
one such computed array $\Gamma'$---namely, the transmitted array
$\Gamma$---can correspond to an error pattern of up to
$(d/2) - 1$ block errors and up to $(\delta-1)/2$
symbol errors. Finding that array can be done simply
by checking each computed $Z'$ against the received array $\Upsilon$.

\begin{remark}
While the decoding scheme that we have just outlined makes
essential use of an efficient list decoder for the $m$-level
interleaving of $\code$, nothing is assumed about $H_\In$
beyond properties~(a)--(b) in
Definition~\ref{def:main:code:construction}.
In particular, nothing is assumed about the decoding
complexity of the code over $F$ that is defined by
the parity-check matrix $H_\In$.\eqed
\end{remark}

Our decoding scheme can be generalized to handle also erasures of
types~(T2) and~(T4) by applying a list decoder for
the GRS code obtained by puncturing $\code'$ on
the columns that are affected by erasures: this translates into
replacing $d$ by $d - \rho - \varrho$ (assuming that the latter value
is positive).

The next theorem characterizes a range of parameters for
which~(\ref{eq:decodingradius}) holds for some polynomially-large list
size $\List$ (and, thus, $\Code$ can be decoded in polynomial time).

\begin{theorem}
\label{thm:GS}
For $\Code = (\code,H_\In)$ such that $\code$ is a GRS code over $F$,
the decoder guaranteed by Theorem~\ref{thm:combinederrors} can be
implemented by a polynomial-time algorithm, whenever
\begin{align}
    \label{eq:d/n}
    d - \rho - \varrho
      &\ge
         2 \sqrt{\delta (n-\rho-\varrho)}
         -
         \delta
\end{align}
(or, simply, whenever $d \ge 2 \sqrt{\delta n} - \delta$
in case $\rho = \varrho = 0$).
\end{theorem}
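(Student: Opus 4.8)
The plan is to show that hypothesis~(\ref{eq:d/n}) forces the existence of a list size $\List$ that is polynomial in $n$ for which the Guruswami--Sudan decoder, applied to $\code'$ after puncturing on the $\rho+\varrho$ erased columns, is guaranteed to return the correct codeword; since every other step of the scheme described above runs in polynomial time, this is all that is needed. Write $n' = n-\rho-\varrho$ and $d' = d-\rho-\varrho$ for the length and minimum distance of the punctured GRS code. By Theorem~\ref{thm:combinederrors} the number of erroneous columns that must be corrected is at most $\tau+\vartheta \le (d'+\delta-3)/2$, so it suffices to exhibit a polynomially-large $\List$ satisfying the erasure analog of~(\ref{eq:decodingradius}), namely $n'\GS_\List(d'/n') \ge (d'+\delta-1)/2$.

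First I would optimize $\GS_{\List,s}(\theta)$, with $\theta = d'/n'$, over the real variable $s$. Setting the derivative to zero gives the maximizer $s^\star = \sqrt{\List(\List+1)(1-\theta)}$, which lies in the admissible range $\{1,\ldots,\List\}$ once $\List \ge n'$; at this point a short computation yields
\begin{align*}
  \GS_{\List,s^\star}(\theta)
    &= 1 - \frac{1}{2(\List+1)} - \sqrt{1-\theta}\,\sqrt{\frac{\List}{\List+1}} \\
    &> \bigl( 1 - \sqrt{1-\theta} \bigr)\Bigl( 1 - \frac{1}{2(\List+1)} \Bigr) \; .
\end{align*}
As $\List \to \infty$ this tends to the Johnson value $1 - \sqrt{1-\theta}$, so the limiting (infinite-list) requirement is $n'\bigl(1 - \sqrt{1 - d'/n'}\bigr) \ge (d'+\delta-1)/2$.

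The crux is the algebraic equivalence, valid whenever $C < n'$,
\begin{align*}
  n'\bigl( 1 - \sqrt{1 - d'/n'} \bigr)^2 \ge C
    \quad\Longleftrightarrow\quad
  d' \ge 2\sqrt{C n'} - C \; ,
\end{align*}
obtained through the substitution $w = 1 - \sqrt{1-d'/n'}$, under which $d' = n'w(2-w)$ is increasing in $w$. Taking $C = \delta-1$ shows that the infinite-list requirement is equivalent to $n'w^2 \ge \delta-1$, while taking $C = \delta$ shows that~(\ref{eq:d/n}) is equivalent to $n'w^2 \ge \delta$; thus~(\ref{eq:d/n}) exceeds the infinite-list requirement by exactly one unit. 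Writing $B = n'w$ for the limiting absolute radius, the same substitution gives $(d'+\delta-1)/2 = B - (n'w^2 - (\delta-1))/2$, so under~(\ref{eq:d/n}) we have $(d'+\delta-1)/2 \le B - \tfrac12$. Combining this with the displayed lower bound, the target $n'\GS_{\List,s} \ge (d'+\delta-1)/2$ holds as soon as $B/(2(\List+1))$, together with the loss from rounding $s^\star$ to an integer, is at most $\tfrac12$; since $B \le n' \le n$, any sufficiently large polynomial $\List$ (e.g. $\List = n^2$) does the job, and for such $\List$ the condition $\List \ge n'$ used above is automatic.

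The main obstacle is precisely this passage from the infinite-list Johnson bound to a finite, polynomially bounded list size: one must control both the $O(1/\List)$ rate at which $\GS_{\List,s^\star}$ approaches $1-\sqrt{1-\theta}$ and the second-order loss incurred by replacing the real optimizer $s^\star$ by an admissible integer $s$. It is exactly the one-unit slack between~(\ref{eq:d/n}), requiring $n'w^2 \ge \delta$, and the limiting condition, requiring $n'w^2 \ge \delta-1$, that makes the passage possible. I would finally verify the boundary hypotheses $\delta \le n'$ and $d' > 0$, so that the square-root manipulations and the puncturing step are both legitimate.
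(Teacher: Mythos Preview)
Your approach is essentially the same as the paper's: both reduce via puncturing to the erasure-free case and then show that hypothesis~(\ref{eq:d/n}) provides just enough slack over the asymptotic Guruswami--Sudan radius $1-\sqrt{1-d/n}$ to permit a polynomially bounded list size~$\List$ in~(\ref{eq:decodingradius}). The paper's execution is slightly more direct in that, instead of optimizing over a real $s$ and then controlling the rounding loss, it fixes the explicit integer $s = \List - \bigl\lceil \List\sqrt{(\delta-1)/n}\,\bigr\rceil$ and verifies $\GS_{\List,s}(d/n) \ge (d+\delta-1)/(2n)$ by a short chain of algebraic manipulations, thereby sidestepping the asymptotic and second-order analysis you outline.
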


\begin{proof}
We will assume in the proof that $\rho = \varrho = 0$;
the general case follows by observing that any puncturing of $\code'$
on $\rho + \varrho$ positions results in a GRS code of
length $n - \rho - \varrho$ and minimum distance $d - \rho - \varrho$.
  
Our proof will be complete once we identify
a poly\-nomially-large $\List$ 
for which~(\ref{eq:decodingradius}) holds.
We take $\List$ to be such that
\begin{align}
    \label{eq:GS}
    \frac{d}{n}
      &\ge
         2 \sqrt{\frac{\delta-1}{n}}
         -
         \frac{\delta-1}{n}
         +
         \frac{2}{\List+1} \; .
\end{align}
It readily follows from~(\ref{eq:d/n}) that there exists such
an $\List$ which is at most quadratic in $n$.

Define $s$ to be
\begin{align*}
    s
      &= \List 
         -
         \left\lceil
           \List 
           \cdot 
           \sqrt{\frac{\delta-1}{n}}
         \right\rceil \; .
\end{align*}
  From~(\ref{eq:GS}) we have
\begin{align*}
    \frac{d}{n}
      &\ge
         \frac{s}{\List-s}
         \cdot
         \frac{\delta-1}{n}
         +
         \frac{\List+1-s}{\List+1} \; ,
\end{align*}
which can also be rewritten as
\begin{align*}
    \frac{d}{n}
      &\ge
         \frac{2s}{\List-s}
         \left(
           \frac{\delta-1}{2n}
           +
           \frac{\List}{2s}
           +
           \frac{s+1}{2(\List+1)}
           -
           1
         \right) \; .
\end{align*}
Multiplying both sides by $(\List-s)/(2s)$ and rearranging terms yields
  \begin{align*}
    1
    -
    \frac{s+1}{2(\List+1)}
    -
    \frac{\List}{2s}
      \left(
        1 - \frac{d}{n}
      \right)
      &\ge
         \frac{d}{2n}
         +
         \frac{\delta-1}{2n} \; ,
\end{align*}
which is equivalent to
\begin{align*}
    \GS_{\List,s}(d/n)
      &\ge
         \frac{d + \delta - 1}{2n} \; .
\end{align*}
This, in turn, implies~(\ref{eq:decodingradius}).
\end{proof}

The range of parameters in~(\ref{eq:d/n}) may potentially be
increased in light of a recent result of Guruswami and Xing
on list decoding of interleaved GRS codes~\cite{GX}.

\subsection{Decoding of Errors and Erasures of Type (T1), (T2), (T4),
                              but not of Type (T3)}
\label{sec:decoding:t1:t2:t4}

In this section, we present an efficient decoder for the code $\Code$
when constructed as in Example~\ref{ex:GRS}, for the special case
where $\vartheta = |\LL| = 0$ (no errors of type~(T3)).%
  \footnote{%
      This special case has also been considered in~\cite{SSB0},
      yet under the setting of Section~\ref{sec:probabilistic}, namely,
      where the decoding algorithm may fail
      with a (controlled) positive probability.
    }

An $m \times n$ matrix $\Gamma$ is transmitted and
an $m \times n$ matrix
\begin{align*}
  \Upsilon
    &= \Gamma + \Epsilon
\end{align*}
is received, where
$\Epsilon = (\varepsilon_{\kappa,j})_{\kappa \in \Interval{m},
         j \in \Interval{n}}$
is an $m \times n$ error matrix, with
$\JJ \; (\subseteq \Interval{n})$
(respectively, $\KK \; (\subseteq \Interval{n})$) indexing
the columns in which block errors (respectively,
block erasures) have occurred, and
$\RR \; (\subseteq \Interval{m} \times \Interval{n})$ is
a nonempty set of positions where symbol erasures
have occurred.%
  \footnote{%
      When performing arithmetic operations on $\Upsilon$,
      we assume that the erased entries in the array are preset to some
      arbitrarily-selected elements of $F$, whereas the sets $\KK$ and
      $\RR$ are provided as side information. Thus, $\Epsilon$ is also
      an array over $F$.
  }
We assume that $d$,
$\tau \; (= |\JJ|)$, and $\rho \; (= |\KK|$) satisfy
\begin{align}
  \label{eq:bounderasure}
  2 \tau + \rho
    &\le d - 2
\end{align}
and that $\varrho \; (= |\RR|)$ satisfies
\begin{align*}
  0
    &< \varrho
     \le 
       m \; .
\end{align*}
Define
\begin{align*}
  Y
    &= \left(
         \begin{array}{c|c|c|c}
       H_0 \Upsilon_0 & H_1 \Upsilon_1 & \ldots & H_{n-1} \Upsilon_{n-1}
         \end{array}
       \right)
\end{align*}
and
\begin{align*}
  \lefteqn{
    E = (e_{h,j})_{h \in \Interval{m}, j \in \Interval{n}}
  } \makebox[15ex]{} \\
    &= \left(
         \begin{array}{c|c|c|c}
       H_0 \Epsilon_0 & H_1 \Epsilon_1 & \ldots & H_{n-1} \Epsilon_{n-1}
         \end{array}
       \right)
       \; .
\end{align*}
Clearly,
\begin{align*}
  Y
    &= Z + E \; ,
\end{align*}
where $Z$ is given by~(\ref{eq:HjGammaj}). In particular, every row
in $Z$ is a codeword of $\code_\GRS$.

Next, write
$\RR = \{ (\kappa_\ell,j_\ell) \}_{\ell \in \Interval{\varrho}}$.
  For each $\ell \in \Interval{\varrho}$, define the following
  univariate polynomial (of degree $\varrho-1$)
\begin{align}
  \label{eq:Beta}
  \Beta^{(\ell)}(y)
    &= \sum_{i \in \Interval{\varrho}}
         \Beta_i^{(\ell)} y^i \\
    &= \prod_{(\kappa,j) \in \RR \setminus\{(\kappa_\ell,j_\ell)\}}
         \frac{1 - \beta_{\kappa,j} y}
              {1 - \beta_{\kappa,j}\beta_{\kappa_\ell,j_\ell}^{-1}} \; ,
\end{align}
and let $\blde^{(\ell)} = (e_j^{(\ell)})_{j \in \Interval{n}}$ denote
row~$\varrho-1$ of the $(m{+}\varrho{-}1) \times n$ matrix
$\Beta^{(\ell)}(y) E(y,x)$
  (where we recall the definition of $E(y,x)$ from Section~\ref{sec:notation}).
We have
\begin{align*}
  \Support(\blde^{(\ell)})
    &\subseteq
       \JJ
       \cup
       \KK
       \cup
       \{ j_\ell \} \; ,
         \quad  \ell \in \Interval{\varrho} \; .
\end{align*}
Indeed, the contribution of a symbol erasure at position
$(\kappa,j)$ in $\Epsilon$ to the column $E_j(y)$ of $E(y,x)$ is
an additive term of the form
\begin{align*}
  \varepsilon_{\kappa,j}
  \cdot
  \uu_m(y;\beta_{\kappa,j})
    &= \varepsilon_{\kappa,j}
       \cdot
       \frac{1 - (\beta_{\kappa,j} y)^m}
            {1 - \beta_{\kappa,j} y}
\end{align*}
(where we recall the definition of $\uu_m(\cdot;\cdot)$ from
Section~\ref{sec:notation});
so, if $(\kappa,j) \ne (\kappa_\ell,j_\ell)$ then the product
\begin{align*}
  \Beta^{(\ell)}(y)
  {\cdot}
  \varepsilon_{\kappa,j}
  {\cdot}
  \frac{1 - (\beta_{\kappa,j} y)^m}{1 - \beta_{\kappa,j} y} =
  \varepsilon_{\kappa,j}
  {\cdot}
  \frac{\Beta^{(\ell)}(y)}{1 - \beta_{\kappa,j} y}
  {\cdot}
  \bigl( 1 - (\beta_{\kappa,j} y)^m \bigr)
\end{align*}
is a polynomial in which the powers
$y^{\varrho-1}, y^\varrho, \ldots, y^{m-1}$ have zero coefficients.

Now, 
   for each $\ell \in \Interval{\varrho}$,
every row in the $(m{+}\varrho{-}1) \times n$ array
$Z^{(\ell)}(y,x) = \Beta^{(\ell)}(y) Z(y,x)$ is a codeword of
$\code_\GRS$. Therefore, by applying a decoder for
$\code_\GRS$ to row~$\varrho-1$ of $Z^{(\ell)}$ with $\rho+1$ erasures
indexed by $\KK \cup \{ j_\ell \}$, we should be able to
decode the vector $\blde^{(\ell)}$, based on our
assumption~(\ref{eq:bounderasure}).

It follows from the definition of~$\blde^{(\ell)}$ that for every $j \in
\Interval{n}$,
\newcommand{\RB}{\raisebox{0ex}[2.5ex][1ex]{}}
\begin{align}
\label{eq:filtering}
  \left(
    \begin{array}{@{\!}c@{\!}}
     \RB e_j^{(0)} \\ \RB e_j^{(1)} \\
     \RB \vdots    \\ \RB e_j^{(\varrho-1)}
    \end{array}
  \right)
    &= \left(
         \begin{array}{@{\!}cccc@{\!}}
     \RB
     \Beta_0^{(0)} & \Beta_1^{(0)} & \ldots & \Beta_{\varrho-1}^{(0)} \\
     \RB
     \Beta_0^{(1)} & \Beta_1^{(1)} & \ldots & \Beta_{\varrho-1}^{(1)} \\
     \RB
     \vdots        & \vdots        & \vdots & \vdots                  \\
     \RB
     \Beta_0^{(\varrho-1)} & \Beta_1^{(\varrho-1)} & \ldots &
     \Beta_{\varrho-1}^{(\varrho-1)} \\
         \end{array}
       \right)
       \left(
         \begin{array}{@{\!}c@{\!}}
           \RB e_{\varrho-1,j} \\ \RB e_{\varrho-2,j} \\
           \RB \vdots \\ \RB e_{0,j}
         \end{array}
       \right)
       \; .
\end{align}
In particular,
\begin{align*}
  e_{j_\ell}^{(\ell)}
    &= \sum_{i \in \Interval{\varrho}}
         B_i^{(\ell)}
           \sum_{\kappa \,:\, (\kappa,j_\ell) \in \RR}
             \varepsilon_{\kappa,j_\ell}
             \beta_{\kappa,j_\ell}^{\varrho-1-i} \\
    &= \sum_{\kappa \,:\, (\kappa,j_\ell) \in \RR}
         \varepsilon_{\kappa,j_\ell}
         \beta_{\kappa,j_\ell}^{\varrho-1}
         \Beta^{(\ell)}(\beta_{\kappa,j_\ell}^{-1}) \\
    &= \varepsilon_{\kappa_\ell,j_\ell}
         \beta_{\kappa,j_\ell}^{\varrho-1} \; .
\end{align*}
Ranging over all $\ell \in \Interval{\varrho}$,
we are able to recover the erasures 
in $\Epsilon$ at the positions $\RR$. Namely,
\begin{align*}
  \varepsilon_{\kappa_\ell,j_\ell}
    &= e_{j_\ell}^{(\ell)}
       \beta_{\kappa_\ell,j_\ell}^{1-\varrho} \; ,
         \quad \ell \in \Interval{\varrho} \; .
\end{align*}
This, in turn, allows us to eliminate
the symbol erasures from $E$.

Fig.~\ref{fig:allerasuredecoding} summarizes the decoding algorithm
of a combination of errors of type~(T1), (T2), and~(T4).
The complexity of Step~\ref{item:1} is
$O \bigl( (d+m)m n \bigr)$ operations in $F$ (see the discussion that
precedes Example~\ref{ex:GRS}).  Step~\ref{item:2} requires
$O(d \rho \varrho)$ operations.
Each iteration in Step~\ref{item:3} requires $O(d \varrho)$ operations
(for Step~\ref{item:3a}), $O(d^2)$ operations (for Step~\ref{item:3b}), 
and $O(d m)$ operations (for Step~\ref{item:3c}),
totaling to $O \bigl( d(d+m) \varrho \bigr)$ for Step~\ref{item:3}.
Step~\ref{item:4} requires $O(d^2 m)$ operations to compute
the error-locator and error-evaluator polynomials,
and an
additional $O(dn)$ for the Chien search.
Finally, Step~\ref{item:5} requires $O(d m^2)$ operations.
To summarize, the decoding complexity amounts to
$O \bigl( (d+m)m n \bigr)$ operations for syndrome computation,
$O ( d n )$ for the Chien search,
and $O \bigl( d(d+m)m \bigr)$ for the remaining steps.

\begin{figure}[pht!]
  \begin{myalgorithm}
    \small
    \textbf{Input:}
    \begin{itemize}

    \item Array $\Upsilon$ of size $m \times n$ over $F$.

    \item Set $\KK$ of indexes of column erasures.

    \item Set $\RR = \{ (\kappa_\ell,j_\ell) \}_{\ell \in
        \Interval{\varrho}}$ of positions of symbol erasures.

    \end{itemize}
    \textbf{Steps:}
    \begin{enumerate}

    \item
      \label{item:1}
      Compute the $m \times (d{-}1)$ syndrome array
      \begin{align*}
        S 
          &= \left(
               \begin{array}{c|c|c|c}
       H_0 \Upsilon_0 & H_1 \Upsilon_1 & \ldots & H_{n-1} \Upsilon_{n-1}
               \end{array}
             \right)
             H_\GRS^\transpose \; .
      \end{align*}
    
    \item
      \label{item:2}
      Compute the modified syndrome array to be the unique
      $\varrho \times (d{-}1)$ matrix $\sigma$ that satisfies
      the congruence
      \begin{align*}
        \sigma(y,x)
          &\equiv
             S(y,x)
             \prod_{j \in \KK}
               (1 - \alpha_j x)
                 \quad (\mod \; \{ x^{d-1}, y^\varrho\}) \; .
      \end{align*}

    \item
      \label{item:3}
      For every $\ell \in \Interval{\varrho}$ do:
      \begin{enumerate}
      \itemsep0.75ex
      \item
        \label{item:3a}
        Compute row~$\varrho-1$ in the unique $\varrho \times (d{-}1)$
        matrix $\sigma^{(\ell)}$ that satisfies the congruence
        \begin{align*}
          \sigma^{(\ell)}(y,x)
            &\equiv
               \Beta^{(\ell)}(y) \, 
               \sigma(y,x) \, 
               (1 - \alpha_{j_\ell} x) \\
            &\quad\quad\quad\quad\quad\quad\quad\quad\quad
                 (\mod \; \{ x^{d-1}, y^\varrho \}) \; ,
        \end{align*}
        where $\Beta^{(\ell)}(y)$ is as in~(\ref{eq:Beta}).

    \item
      \label{item:3b}
      Decode $e_{j_\ell}^{(\ell)}$
      (\ie, entry~$j_\ell$ in $\blde^{(\ell)}$)
      by applying a decoder for $\code_\GRS$ using row~$\varrho-1$ in
      $\sigma^{(\ell)}$ as syndrome and assuming that columns indexed by
      $\KK \cup \{ j_\ell \}$ are erased.  Compute
      $\varepsilon_{\kappa_\ell,j_\ell} = e_{j_\ell}^{(\ell)} \cdot
      \beta_{\kappa_\ell,j_\ell}^{1-\varrho}$.

    \item
      \label{item:3c}
      Update the received array $\Upsilon$ and the syndrome array $S$ by
      \begin{align*}
        \Upsilon(y,x)
          &\leftarrow
             \Upsilon(y,x)
             -
             \varepsilon_{\kappa_\ell,j_\ell}
               {\cdot}
               x^{j_\ell}
               y^{\kappa_\ell} \\
        S(y,x)
          &\leftarrow
             S(y,x)
             -
             \varepsilon_{\kappa_\ell,j_\ell}
               {\cdot}
               \uu_{d-1}(x; \alpha_{j_\ell})
               {\cdot}
               \uu_m(y; \beta_{j_\ell}) \; .
      \end{align*}

    \end{enumerate}

    \item
      \label{item:4}
      For every $h \in \Interval{m}$, apply a decoder for $\code_\GRS$
      using row~$h$ of $S$ as syndrome and assuming that columns
      indexed by $\KK$ are erased.
      Let $E$ be the $m \times n$ matrix whose rows are the
      decoded error vectors for all $h \in \Interval{m}$.

    \item
      \label{item:5}
      Compute the error array
      \begin{align*}
        \Epsilon
          &= \left(
               \begin{array}{c|c|c|c}
             H_0^{-1} E_0 & H_1^{-1} E_1 & \ldots & H_{n-1}^{-1} E_{n-1}
               \end{array}
             \right)
             \; .
      \end{align*}

    \end{enumerate}
    \textbf{Output:}
    \begin{itemize}

    \item Decoded array $\Upsilon - \Epsilon$ of size $m \times n$.

    \end{itemize}
  \end{myalgorithm}
  \caption{Decoding of errors and erasures of type (T1), (T2), (T4), but
    not of type (T3). (See Section~\ref{sec:decoding:t1:t2:t4}.)}
  \label{fig:allerasuredecoding}
\end{figure}

\subsection{Decoding of Errors and Erasures of Type (T1), (T2), (T4),
                          and with Restrictions on Errors of Type (T3)}
\label{sec:specialconfigurations}

In this section, we consider the decoding of $\Code$ when constructed as in
Example~\ref{ex:GRS}, under certain assumptions on the set $\LL$, namely,
under some restrictions on the symbol error positions (errors of type~(T3)).
These restrictions always hold when $|\LL| \le 3$ and $d$ is sufficiently
large.

Specifically, we consider the case where each column, except possibly
for one column, contains at most one symbol error.
The general strategy will be to locate the positions of these errors,
thereby reducing to the case considered in
Section~\ref{sec:decoding:t1:t2:t4}. We use the same notation as in
that section, except that the set $\LL$ is not necessarily empty and
that (for reasons of simplicity) the set $\RR$ is empty. As in
Section~\ref{sec:decoding:t1:t2:t4}, the number $\tau$ of
block errors and the number $\rho$ of
block erasures satisfy $2 \tau + \rho \le d-2$.

When $\vartheta = |\LL| > 0$, we write
$\LL = \{ (\kappa_\ell,j_\ell) \}_{\ell \in \Interval{\vartheta}}$,
and assume that that there exists
a $w \in \Interval{\vartheta}$ such that the values
$j_0, j_1, \ldots, j_w$ are all distinct, while
$j_w = j_{w+1} = \cdots = j_{\vartheta-1}$.  Furthermore,
$\vartheta$ and $w$ should satisfy the inequalities
\begin{align}
  \label{eq:vartheta:1}
  \vartheta
    &\le
       \frac{m}{2} \; , \\
  \label{eq:vartheta:2}
  w + \tau + \rho
    &\le
       d-2 \; .
\end{align}
(While the inequality in~(\ref{eq:vartheta:1}) is already part of the
requirements in Theorem~\ref{thm:combinederrors}, we need 
the inequality in~\eqref{eq:vartheta:2} 
so that~(\ref{eq:a}) will hold.  Specifically, the inequality
in~(\ref{eq:vartheta:2}) says that the number of erroneous columns does not
exceed $d-1$. Observe that the inequality $2 \tau + \rho \le d-2$ and the
inequality in~(\ref{eq:vartheta:1}) together imply~(\ref{eq:vartheta:2})
whenever $m \le d-\rho$.)

Without any loss of generality, we will also assume that
$\varepsilon_{\kappa_\ell,j_\ell} \ne 0$ for every
$\ell \in \Interval{\vartheta}$.
The set $\{ j_\ell \}_{\ell \in \Interval{w+1}}$ will
be denoted hereafter by $\LL'$.
When $\vartheta = 0$, we formally define $w$
to be $0$ and $\LL'$ to be the empty set.

Let the modified syndrome $\sigma$ be
the unique $m \times (d{-}1)$ matrix that satisfies
\begin{align*}
  \sigma(y,x)
    &\equiv
       S(y,x)
       \cdot
       \prod_{j \in \KK}
         (1 - \alpha_j x)
           \quad (\mod \; x^{d-1}) \; ,
\end{align*}
and let $\tS$ be the $m \times (d{-}1{-}\rho)$ matrix formed by
the columns of $\sigma$ that are indexed by $\Interval{\rho,d-1}$.
We recall from~(\ref{eq:a}) that
$\mu = \rank(\tS) = \rank\bigl( (E)_{\JJ \cup \LL'}
\bigr)$.

If $\mu \ge 2w + 2$, then we can regard the columns that are indexed
by $\LL'$ as full block errors
(namely, errors of type~(T1)), and the conditions
of Lemma~\ref{lem:GRS} will still be satisfied, namely, we will have
\begin{align*}
  2(\tau + w + 1) + \rho
    &\le
       d + \mu - 2 \; .
\end{align*}
Therefore, we assume from now on that $\mu \le 2 w + 1$.

By~(\ref{eq:a}) we get that for
every $j \in \JJ \cup \LL'$, column~$E_j$ belongs to $\colspan(\tS)$.
In particular, this holds for $j \in \LL' \setminus \{ j_w \}$,
in which case $E_j$ (in polynomial notation) takes the form
\begin{align*}
  E_j(y)
    &= \varepsilon_{\kappa,j}
       \cdot
       \uu_m(y;\beta_{\kappa,j}) \; .
\end{align*}

Let the row vectors $\blda_0, \blda_1, \ldots, \blda_{m-\mu-1}$ form
a basis of the dual space of $\colspan(\tS)$,
and for every $i \in \Interval{m-\mu}$,
let $a_i(y)$ denote the polynomial of degree less than $m$
with coefficient vector $\blda_i$;
we can further assume that this basis is in echelon form, \ie,
$\deg a_0(y) < \deg a_1(y) < \ldots < \deg a_{m-\mu-1}(y) < m$.
This, in turn, implies that the degree of
$a(y) = \gcd(a_0(y), a_1(y), \ldots, a_{m-\mu-1}(y))$ satisfies
\begin{align*}
  \deg a(y)
    &\le
       \mu \; ,
\end{align*}
namely, $a(y)$ has at most $\mu \; (\le 2 w + 1)$ distinct roots in $F$.
Now, it is easy to see that for every $\xi \in F$,
the column vector $( \xi^h)_{h \in \Interval{m}}$
(also represented as $\uu_m(y;\xi)$) belongs to $\colspan(\tS)$
(and, hence, to $\colspan(E)_{\JJ \cup \LL'}$), if and only if
$\xi$ is a root of $a(y)$. In particular,
$\beta_{\kappa_\ell,j_\ell}$ is a root of $a(y)$ for every
$\ell \in \Interval{w}$. We denote by $\RR$ the root subset
\begin{align}
\label{eq:RR}
  \RR
    &= \bigl\{
         (\kappa,j) 
         \,:\, 
         a(\beta_{\kappa,j}) = 0 
       \bigr\} \; ,
\end{align}
and define the polynomial $\Alpha(y)$ by
\begin{align}
  \label{eq:Alpha}
  \Alpha(y) 
    &= \sum_{i=0}^\eta
         \Alpha_i y^i
     = \prod_{(\kappa,j) \in \RR}
         (1 - \beta_{\kappa,j} y) \; ,
\end{align}
where $\eta = |\RR|$.

Consider the $(m{-}\eta) \times n$ matrix
$\hE = (\he_{h,j})_{h \in \Interval{m-\eta},j\in\Interval{n}}$
which is formed by the rows of
$\Alpha(y) E(y,x)$ that are indexed by $\Interval{\eta,m}$.
Specifically,
\begin{align*}
  \he_{h,j} =
  \sum_{i=0}^\eta \Alpha_i \, e_{h+\eta-i,j} \; ,
  \quad
  h \in \Interval{m-\eta} \; ,
  \quad
  j \in \Interval{n}
\end{align*}
(compare with~(\ref{eq:filtering})).  Respectively, let $\hS$ be
the $(m{-}\eta) \times (d{-}1{-}\rho)$ matrix formed by
the rows of $\Alpha(y) \tS(y,x)$ that are indexed by
$\Interval{\eta,m}$.  It readily follows that $\hE_{j_\ell}(y) = 0$
for $\ell \in \Interval{w}$ and that
\begin{align}
\label{eq:jw}
  \hE_{j_w}(y)
    = \!\! \sum_{\ell \in \Interval{w,\vartheta}}
        \left(
          \varepsilon_{\kappa_\ell,j_w}
          \beta_{\kappa_\ell,j_w}^\eta
          \Alpha(\beta_{\kappa_\ell,j_w}^{-1})
        \right)
        \cdot
        \uu_{m-\eta}(y;\beta_{\kappa_\ell,j_w})
        \; .
\end{align}
Observe that the number of summands on
the right-hand side of~(\ref{eq:jw}) is $\vartheta - w$,
and that number is bounded from above by
$m - 2w - 1 \le m - \mu \le m - \eta$.
This means that $\hE_{j_w}(y) = 0$ if and only if
$\Alpha(\beta_{\kappa_\ell,j_w}^{-1}) = 0$ for all $\ell \in
\Interval{w,\vartheta}$.  We also recall that
\begin{align}
  \label{eq:hatS}
  \rank(\hS)
    &= \rank\bigl( (\hE)_{\JJ \cup \{ j_w \}} \bigr)
     = \mu - \eta \; .
\end{align}

Next, we distinguish between the following three cases.

\emph{Case~1: $\eta = \mu$.}
By~(\ref{eq:hatS}) we must have $\hE_{j_w}(y) = 0$,
which is equivalent to having
$\Alpha(\beta_{\kappa_\ell,j_\ell}^{-1}) = 0$
for all $\ell \in \Interval{\vartheta}$.
Thus, assuming this case, we have
$\LL \subseteq \RR$, and the decoding problem then reduces to the one
discussed in Section~\ref{sec:decoding:t1:t2:t4}.

\emph{Case~2: $\eta = \mu - 1$.}
If $\hE_{j_w}(y) = 0$ then $\LL \subseteq \RR$.
Otherwise, it follows from~(\ref{eq:hatS}) that each column
in $\hS$ must be a scalar multiple of $\hE_{j_w}$.  The entries of
$\hE_{j_w}$, in turn, form a sequence that satisfies
the (shortest) linear recurrence
\begin{align*}
  \Beta(y)
    &= \sum_{i=0}^{|\RR'|}
         \Beta_i y^i
     = \prod_{(\kappa,j) \in \RR'}
         (1 - \beta_{\kappa,j} y) \; ,
\end{align*}
where
\begin{align*}
  \RR'
    &= \Bigl\{
         (\kappa_\ell,j_w)
         \, : \,
         \textrm{$\ell \in \Interval{w,\vartheta}$ and
                 $\Alpha(\beta_{\kappa_\ell,j_w}^{-1}) \ne 0$}
       \Bigr\} \; .
\end{align*}
Indeed, this recurrence is uniquely determined, since
the number of entries in $\hE_{j_w}$,
which is $m - \eta = m - \mu + 1 \ge m - 2w$, is at least
twice the degree $|\RR'| \; (\le \vartheta - w)$ of $\Beta(y)$.  The
recurrence can be computed efficiently from any nonzero column of $\tS$
using Massey's algorithm
  for finding the shortest linear feedback shift register capable of
  generating a prescribed finite sequence of symbols~\cite{Massey}
  (cf.~Berlekamp--Massey algorithm as in, e.g., \cite{Roth}).
We now have $\LL \subseteq \RR \cup \RR'$, where
\begin{align*}
  |\RR \cup \RR'|
    &= |\RR| + |\RR'| \\
    &\le
       \eta + \vartheta - w \\
    &\le
       2w + \vartheta - w \\
    &= \vartheta + w \\
    &< m \; .
\end{align*}
So the decoding problem again reduces to 
that in Section~\ref{sec:decoding:t1:t2:t4}.

\emph{Case~3: $\eta \le \mu - 2$.}
 If $\hE_{j_w}(y) = 0$ then (again) $\LL \subseteq \RR$.
Otherwise, the conditions of Lemma~\ref{lem:GRS} hold with respect
to $\hE$ and to the matrix $\hZ$ formed by the rows of
$\Alpha(y) Z(y,x)$ indexed by $\Interval{\eta,m}$
(each such row is a codeword of $\code_\GRS$).
Hence, we can decode $\hE$.  Next, we observe from~(\ref{eq:jw})
that for $j = j_w$, the vector $\hE_j(y)$ can be seen as
a syndrome of the column vector
\begin{align*}
  \Epsilon_j^*(y)
    &= \sum_{\kappa \in \Interval{m} \, : \,
               \Alpha(\beta_{\kappa,j}^{-1}) \ne 0}
         \varepsilon_{\kappa,j} y^\kappa
\end{align*}
with respect to the following $(m{-}\eta) \times m$ parity-check matrix
of a GRS code:
\begin{align}
\label{eq:Hj1}
  H_\GRS^{(j)}
    &= \left(
         v_{\kappa,j} \beta_{\kappa,j}^h
       \right)_{h \in \Interval{m-\eta}, \kappa \in \Interval{m}}
       \; ,
\end{align}
where
\begin{align}
\label{eq:Hj2}
  v_{\kappa,j}
    &= \left\{
         \begin{array}{ccl}
           \beta_{\kappa,j}^\eta \Alpha(\beta_{\kappa,j}^{-1})
           && \textrm{if $\Alpha(\beta_{\kappa,j}^{-1}) \ne 0$} \\
           1
           && \textrm{otherwise}
         \end{array}
       \right.
       \; .
\end{align}
And since the Hamming weight of $\Epsilon_j^*$ is at most
$\vartheta - w < (m - \eta)/2$, we can decode $\Epsilon_j^*$
uniquely from $\hE_j$ (again, under the running assumption that
$j = j_w$).  Thus, for every $\kappa$ such that
$\Alpha(\beta_{\kappa,j}^{-1}) \ne 0$, we can recover the error value
$\varepsilon_{\kappa,j}$ and subtract it from the respective entry of
$\Upsilon$, thereby making $\RR$
a superset of the remaining symbol errors.
The problem though is that we do not know the index $j_w$.  Therefore,
we apply the above process to \emph{every} nonzero column in $\hE$
with index $j \not\in \KK$. A decoding failure means that $j$
is certainly not $j_w$, and a decoding success for $j \ne j_w$ will
just cause us to incorrectly change already corrupted-columns in
$\Upsilon$, without introducing new erroneous columns.
We can then proceed with the decoding of $\Upsilon$ as in
Section~\ref{sec:decoding:t1:t2:t4}.

\begin{figure}
  \begin{myalgorithm}
    \small
    \textbf{Input:}
    \begin{itemize}

    \item Array $\Upsilon$ of size $m \times n$  over $F$.

    \item Set $\KK$ of indexes of column erasures.

    \end{itemize}
    \textbf{Steps:}
    \begin{enumerate}
      \itemsep0ex

    \item
      \label{item:dec1}
      Compute the $m \times (d{-}1)$ syndrome array
      \begin{align*}
        S
          &= \left(
               \begin{array}{c|c|c|c}
       H_0 \Upsilon_0 & H_1 \Upsilon_1 & \ldots & H_{n-1} \Upsilon_{n-1}
               \end{array}
             \right)
             H_\GRS^\transpose \; .
      \end{align*}

    \item
      \label{item:dec2}
      Compute the $m \times (d{-}1{-}\rho)$ matrix $\tS$ formed by
      the columns of
      $S(y,x) \prod_{j \in \KK} (1 - \alpha_j x)$ that are indexed
      by $\Interval{\rho,d-1}$.  Let $\mu = \rank(\tS)$.

    \item
      \label{item:dec3}
      (Attempt to correct assuming $|\LL'| \le \mu/2$.)  Apply
      Steps~\ref{item:GRS3}--\ref{item:GRS4} in Fig.~\ref{fig:GRS}
      (with $K = \KK$) to the modified syndrome array $\sigma(y,x)$,
      to produce an error array $E$.  If decoding is successful, go to
      \textbf{Step~\ref{item:dec8}}.

    \item
      \label{item:dec4}
      \begin{enumerate}
        \itemsep0.75ex

      \item
        \label{item:dec4a}
        Compute the greatest common divisor $a(y)$ of a basis of
        the left kernel of $\tS$.

      \item
        \label{item:dec4b}
        Compute the set $\RR$ and the polynomial $\Alpha(y)$ as
        in~(\ref{eq:RR})--(\ref{eq:Alpha}). Let $\eta = |\RR|$.

      \item
        \label{item:dec4c}
        Compute the $(m{-}\eta) \times (d{-}1{-}\rho)$ matrix $\hS$
        formed by the rows of $\Alpha(y) \tS(y,x)$ that are indexed by
        $\Interval{\eta,m}$.

      \end{enumerate}

  \item
    \label{item:dec5}
    If $\eta = \mu - 1$ then do:
    \begin{enumerate}
      \itemsep0.75ex

    \item
      \label{item:dec5a}
      Compute the shortest linear recurrence $\Beta(y)$ of
      any nonzero column in $\hS$.

    \item
      \label{item:dec5b}
      Compute the set
      \begin{align*}
        \RR'
          &= \left\{
               (\kappa,j)
               \,:\,
               \textrm{$\Alpha(\beta_{\kappa,j}^{-1}) \ne 0$ and
                       $\Beta(\beta_{\kappa,j}^{-1}) = 0$}
             \right\}.
       \end{align*}

    \item
      \label{item:dec5c}
      If $|\RR'| = \deg \Beta(y)$ and $|\RR'| \le m - \eta$
      then update $\RR \leftarrow \RR \cup \RR'$.

    \end{enumerate}

  \item
    \label{item:dec6}
    Else if $\eta \le \mu - 2$ then do:
    \begin{enumerate}
      \itemsep0.75ex

    \item
      \label{item:dec6a}
      Apply Steps~\ref{item:GRS2}--\ref{item:GRS4} in
      Fig.~\ref{fig:GRS} (with $K = \KK$) to the syndrome array $\hS$,
      to produce an error array $\hE$.

    \item
      \label{item:dec6b}
      For every index $j \not\in \KK$ of a nonzero column of
      $\hE$ do:
      \begin{enumerate}
        \itemsep0ex

      \item
        \label{item:(j)}
        Apply a decoder for the GRS code with the parity-check
        matrix~$H_\GRS^{(j)}$ as in~(\ref{eq:Hj1})--(\ref{eq:Hj2}), with
        $\hE_j$ as syndrome, to produce an error vector $\Epsilon_j^*$.

      \item If decoding in Step~\ref{item:(j)} is successful then
        let $E_j^* = H_j \Epsilon_j^*$ and update
        $\Upsilon_j \leftarrow \Upsilon_j - \Epsilon_j^*$ and
        $S(y,x) \leftarrow
                 S(y,x) - E_j^*(y) \cdot \uu_{d-1}(x; \alpha_{j_\ell})$.

      \end{enumerate}

    \end{enumerate}

  \item
    \label{item:dec7}
    Apply Steps~\ref{item:2}--\ref{item:4} in
    Fig.~\ref{fig:allerasuredecoding} to $S$, $\KK$, and $\RR$,
    to produce an error array $E$.

  \item
    \label{item:dec8}
    Compute the error array
    \begin{align*}
      \Epsilon
        &= \left(
             \begin{array}{c|c|c|c}
             H_0^{-1} E_0 & H_1^{-1} E_1 & \ldots & H_{n-1}^{-1} E_{n-1}
             \end{array}
           \right)
           \; .
    \end{align*}
    \end{enumerate}
    \textbf{Output:}
    \begin{itemize}

    \item Decoded array $\Upsilon - \Epsilon$ of size $m \times n$.

    \end{itemize}
  \end{myalgorithm}
  \caption{Decoding of errors and erasures of type (T1), (T2), (T4),
    and with restrictions on errors of type (T3). (See
    Section~\ref{sec:specialconfigurations}.)
    For the sake of simplicity, we
    assume that there are no erasures of type (T4).}
  \label{fig:specialconfigurations}
\end{figure}

Fig.~\ref{fig:specialconfigurations} presents
the implied decoding algorithm of a combination of errors of type~(T1), 
(T2), and~(T3), provided that the type-(T3) errors satisfy
the assumptions laid out at the beginning of this section;
as said earlier, these assumptions hold when $m \le d - \rho$ and
the number of type-(T3) errors is at most~$3$.
Steps~\ref{item:dec1}--\ref{item:dec3},
\ref{item:dec6a}, and~\ref{item:dec7}--\ref{item:dec8}
in Fig.~\ref{fig:specialconfigurations} are
essentially applications of steps in
Figs.~\ref{fig:GRS} and~\ref{fig:allerasuredecoding}.
Next, we analyze the complexity of the remaining steps
in Fig.~\ref{fig:specialconfigurations},
starting with Step~\ref{item:dec4a}.
A basis in echelon form of the left kernel of $\tS$ can be found
using $O(d^2 m)$ operations in $F$, and from this basis we can compute
$a(y)$ using $m$ applications of Euclid's algorithm, amounting to
$O(d m^2)$ operations.
The set $\RR$ can then be found in Step~\ref{item:dec4b} via a
Chien search, requiring $O \bigl( m n \cdot \min(d,m) \bigr)$
operations, followed by $O(d^2 m)$ operations to compute
the matrix $\hS$ in Step~\ref{item:dec4c}.
The complexity of Step~\ref{item:dec5} is dictated by
Step~\ref{item:dec5b} therein which, with a
Chien search,
can be implemented using $O(m^2 n)$ operations.
Finally, Step~\ref{item:dec6b} requires
$O \bigl( d (d+m) m \bigr)$ operations in $F$.
In summary, the decoding complexity of the algorithm in 
Fig.~\ref{fig:specialconfigurations}
amounts to $O \bigl( (d+m)m n \bigr)$ operations
for syndrome computation and the Chien search,
and $O \bigl( d(d+m)m \bigr)$ operations for the other steps.

\section*{Acknowledgment}

The authors thank Erik Ordentlich for helpful discussions.

\appendices

\section{Analysis for Example~\protect\ref{ex:generalizedconcatenated}}
\label{sec:generalizedconcatenated}

We derive the lower bound~(\ref{eq:generalizedconcatenatedlowerbound})
on the expression~(\ref{eq:generalizedconcatenatedredundancy}):

\begin{align*}
\lefteqn{
\sum_{i=1}^v (\Rin_i - \Rin_{i-1}) (\Dout_{i-1} - 1 - 2\tau)
} \makebox[8ex]{} \\
&\stackrel{(\ref{eq:generalizedconcatenated})}{\ge}
\sum_{i=1}^v (\Rin_i - \Rin_{i-1}) 
\left(
\left\lceil \frac{2\vartheta+1}{\Din_{i-1}} \right\rceil 
- 1 \right) \\
&\ge
\sum_{i=1}^v (\Rin_i - \Rin_{i-1}) 
\left(
\left\lceil \frac{2\vartheta+1}{\Rin_{i-1}+1} \right\rceil 
-1 \right) \\
&=
\sum_{i=1}^v
\sum_{j=\Rin_{i-1}+1}^{\Rin_i}
\left(
\left\lceil \frac{2\vartheta+1}{\Rin_{i-1}+1} \right\rceil 
-1 \right) \\
&\ge
\sum_{j=1}^{\Rin_v}
\left(
\left\lceil \frac{2\vartheta+1}{j} \right\rceil - 1 \right)
\end{align*}
\begin{align*}
&\ge
\sum_{j=1}^{2\vartheta}
\left(
\left\lceil \frac{2\vartheta+1}{j} \right\rceil - 1 \right) \\
&=
\sum_{j=1}^\vartheta
\left\lceil \frac{2\vartheta+1}{j} \right\rceil \; ,
\end{align*}
where the penultimate step follows from $\Rin_v \ge 2 \vartheta$.
The lower bound~(\ref{eq:generalizedconcatenatedlowerbound})
immediately follows by the known expression for
harmonic sums~\cite[p.~264]{GKP}.

\end{document}